\documentclass[pra,reprint,singlecolumn,superscriptaddress]{revtex4-2}


\usepackage{algorithm}
\usepackage{algpseudocode}
\usepackage{tikz}
\usetikzlibrary{shapes.geometric}
\usetikzlibrary{quantikz}
\usetikzlibrary{positioning}
\algrenewcommand\algorithmicindent{0.2cm}%

\usepackage{soul}
\usepackage{graphicx}
\usepackage{float}
\usepackage{tabularx,booktabs} 
\usepackage{tcolorbox}

\usepackage{amsmath}
\usepackage{amsthm}
\usepackage{bm}
\theoremstyle{remark}
\newtheorem*{conjecture*}{Conjecture}
\newtheorem*{proposition*}{Proposition}


\usepackage{mathrsfs} 
\usepackage{physics}
\usepackage{mathtools}
\usepackage{mathbbol}
\DeclareSymbolFontAlphabet{\mathbbol}{bbold}
\usepackage{amssymb}
\DeclareSymbolFontAlphabet{\mathbb}{AMSb}
\usepackage{dsfont}
\DeclareMathAlphabet{\mathcal}{OMS}{cmsy}{m}{n}



	\usepackage{lipsum}
	\usepackage{soul} 
	\usepackage{comment} 
	\usepackage{tikz}
	\usetikzlibrary{quantikz}

	\usepackage{orcidlink}

	\graphicspath{{Figures/}}

	\algrenewcommand\algorithmicrequire{\textbf{Input:}}
	\algrenewcommand\algorithmicensure{\textbf{Output:}}

		\begin{document}

\title{On the equivalence between squeezing and entanglement potential for two-mode Gaussian states}

\author{Bohan Li \orcidlink{0000-0002-3928-8596}} %
\affiliation{Centre for Quantum Computation and Communication Technology, Department of Quantum Science and Technology, Australian National University, Canberra, ACT 2601, Australia}
\affiliation{These authors contributed equally}

	\author{Aritra Das \orcidlink{0000-0001-7840-5292}} %
	\affiliation{Centre for Quantum Computation and Communication Technology,
		Department of Quantum Science and Technology, Australian National University, Canberra, ACT 2601, Australia}
\affiliation{These authors contributed equally}

	\author{Spyros Tserkis \orcidlink{0000-0002-3272-990X}} %
	\affiliation{Physical Sciences, College of Letters and Science, University of California, Los Angeles (UCLA), CA, USA}

	\author{Prineha Narang \orcidlink{0000-0003-3956-4594}} %
	\affiliation{Physical Sciences, College of Letters and Science, University of California, Los Angeles (UCLA), CA, USA}

	\author{Ping Koy Lam \orcidlink{0000-0002-4421-601X}} %
	\affiliation{Centre for Quantum Computation and Communication Technology,
		Department of Quantum Science and Technology, Australian National University, Canberra, ACT 2601, Australia}
	\affiliation{Institute of Materials Research and Engineering, Agency for Science, Technology and Research (A*STAR), Singapore 138634, Republic of Singapore}

	\author{Syed M. Assad \orcidlink{0000-0002-5416-7098}} %
	\email{cqtsma@gmail.com, corresponding author}
	\affiliation{Centre for Quantum Computation and Communication Technology,
		Department of Quantum Science and Technology, Australian National University, Canberra, ACT 2601, Australia}
	\affiliation{School of Physical and Mathematical Sciences, Nanyang Technological University, Singapore 639673, Republic of Singapore}


	\begin{abstract}
		The maximum amount of entanglement achievable under passive transformations by continuous-variable states is called the entanglement potential. Recent work has demonstrated that the entanglement potential is upper-bounded by a simple function of the squeezing of formation, and that certain classes of two-mode Gaussian states can indeed saturate this bound, though saturability in the general case remains an open problem. In this study, we introduce a larger class of states that we prove saturates the bound, and we conjecture that all two-mode Gaussian states can be passively transformed into this class, meaning that for all two-mode Gaussian states, entanglement potential is equivalent to squeezing of formation. We provide an explicit algorithm for the passive transformations and perform extensive numerical testing of our claim, which seeks to unite the resource theories of two characteristic quantum properties of continuous-variable systems.
	\end{abstract}

		\maketitle

		\section{Introduction}\label{intro}

		Entanglement is a non-classical property that can be considered as a
		resource for various quantum technology
		applications~\cite{horodecki2009quantum}. In continuous-variable (CV)
		systems~\cite{WPGC+2012}, e.g., systems consisting of bosonic modes,
		entanglement is connected with a more fundamental property called
		squeezing~\cite{walls1983squeezed}.  Squeezing constitutes a necessary
		condition for entanglement in CV Gaussian
		systems~\cite{PhysRevA.64.063811, PhysRevA.65.032323,
			PhysRevA.66.024303} and finds applications in numerous areas of
		quantum optics and CV quantum information, including metrology~\cite{SqueezeMetro1,SqueezeMetro2,SqueezeMetro3}, secure
		quantum communication~\cite{SqueezeQComm,SqueezeQComm2,SqueezeQComm3,SqueezeQComm4}, quantum teleportation~\cite{SqueezeTeleport1,SqueezeTeleport2,SqueezeTeleport3}, cluster states~\cite{ExtCite1,ExtCite2}, heralded
		gates~\cite{SqueezeHeralded} and quantum computation~\cite{ExtCite3,ExtCite4}.

		Moreover,
		any multi-mode squeezed state can be transformed
		into an entangled state under passive operations~\cite{wolf2003entangling, Lami2018}.
		Passive operations in CV systems are relatively easier to perform in the laboratory than active operations.
		There exist multi-mode quantum states that are not entangled,
		but have the potential to be entangled
		by simply mixing on a beam splitter~\cite{PhysRevLett.94.173602,tserkis2020maximum}.
		Motivated by this,
		we study the entanglement potential of Gaussian states.
		Conceptually, the entanglement potential measures the maximum amount of entanglement obtainable under passive operations~\cite{PhysRevLett.94.173602}.
		This potential depends on the way that entanglement is measured, e.g.,
		in Ref.~\cite{PhysRevLett.94.173602},
		logarithmic negativity~\cite{vidal2002computable}
		was selected for this purpose,
		whereas in Ref.~\cite{tserkis2020maximum},
		the entanglement of formation~\cite{bennett1996mixed} was chosen.

		Focusing on the entanglement of formation,
		some of us have previously derived analytic expressions
		for the entanglement potential of a few specific classes of two-mode Gaussian states:
		symmetric states and balanced correlated states~\cite{tserkis2020maximum}.
		These analytic expressions were shown to be directly connected
		to
		the squeezing of formation~\cite{idel2016operational}---a
		measure that quantifies the amount of squeezing in a quantum state.
		In Ref.~\cite{tserkis2020maximum}, an explicit derivation of the passive operations needed to achieve this potential was provided.
		Further,
		it was shown that for general two-mode Gaussian states,
		a monotonic function,~$h_0(\cdot)$, of the squeezing of formation upper-bounds the entanglement potential.

		In this work, we extend that analysis  in two ways:
		first, we analytically show that for a larger, six-parameter class of two-mode Gaussian states,
		the entanglement potential is equal to $h_0(\cdot)$ of the squeezing of formation.
		Henceforth, we shall refer to all states
		having entanglement equal to entanglement potential
		as \textit{potential-saturating} states.
		Second, we conjecture that any two-mode Gaussian state can be passively transformed into a potential-saturating state
		from the six-parameter class of states,
		and present numerical evidence supporting this conjecture.
		If our conjecture holds true, then
		the entanglement potential of all two-mode Gaussian states is exactly equal to~$h_0(\cdot)$ of the squeezing of formation.
		In other words, we find that linear passive optics can
		always maximise the entanglement of a state
		up to a threshold value decided by
		the amount of squeezing present in the state.
		Our result, thus, connects
		the resource theories of squeezing and entanglement for two-mode Gaussian states
		and is
		primely relevant to quantum information and communication protocols, where squeezed states play a major role.

		Our paper is arranged as follows: In Sec.~\ref{pre}, we discuss
		some preliminaries of Gaussian quantum information.
		Then, in Sec.~\ref{bound} we introduce a special class of
		potential-saturating Gaussian states,
		and propose an algorithm
		to passively transform arbitrary two-mode Gaussian states
		into potential-saturating states.
		We present numerical simulations of our algorithm in
		Sec.~\ref{sec:results} to support our conjecture.
		Finally we conclude in Sec.~\ref{conclusion} with a discussion of our results and
		remarks on future scope.

		\section{Results}

		\subsection{Background}\label{pre}

		\subsubsection{Gaussian quantum information}

		Gaussian quantum states, which are the focus of this work, 
		can be fully described 
		by the second statistical moments
		of the associated bosonic-field quadrature operators
		(assuming the first statistical moments,
		i.e., the mean values, to be zero).
		The quadrature field operators~$\hat{x}_j$ and~$\hat{p}_j$
		are the real and imaginary parts, respectively,
		of the bosonic-field annihilation operator for the~$j^\text{th}$ mode.
		Accordingly, any~$N$-mode Gaussian state admits
		a finite-dimensional representation via the covariance matrix~$\sigma$
		of its quadrature field operators.
		This covariance matrix is a~$2N\times 2N$ real symmetric matrix satisfying the uncertainty relation~\cite{simon1994quantum}~$\sigma+i \Omega \geq 0$,
		where~$\Omega$ is the symplectic form
		given in Appendix~\ref{sec:appA}.
		Apart from the regular eigenvalues~$\{\lambda_j\}$ of~$\sigma$,
		it is also useful to also define the symplectic eigenvalues~$\{\nu_j\}$ of~$\sigma$,
		which are the positive eigenvalues of~$i \Omega \sigma$.
		We denote the symplectic eigenvalues arranged in increasing order by~$\nu_j^\uparrow$,
		so that~$\nu_1^\uparrow \leq \nu_2^\uparrow \leq \nu_N^\uparrow$.
		Then,
		the uncertainty relation for~$\sigma$ is equivalent to
		the condition
		$\nu_1 \geq 1$~\cite{WPGC+2012}.

		In the symplectic representation, Gaussian transformations, which map Gaussian states to themselves,
		are given by symplectic matrices~$K \in \mathrm{Sp} (2 N, \mathbb{R})$,
		such that~$K \Omega K^\top = \Omega$, and $K$ acts on~$\sigma$ as~$\sigma \mapsto K \sigma K^\top$.
		Here~$\mathrm{Sp} (2 N, \mathbb{R})$ denotes the group of symplectic~$2N\!\times\!2N$ matrices over real numbers.
		Typical Gaussian transformations include beam splitters~$K_\text{bs}(\tau)$
		with transmissivity~$\tau\in[0,1]$
		and phase rotations~$K_\text{rot}(\theta)$
		with angle~$\theta\in[0, 2\pi)$;
		these are both passive operations,
		meaning they do not introduce extra energy into the system
		and thus, leave the trace of the covariance matrix,~$\Tr \sigma$ invariant.

		Active Gaussian transformations,
		on the other hand,
		include
		two local single-mode squeezers, denoted~$S_1(r_1, r_2)$,
		with real-valued squeezing parameters~$r_j$ for mode~$j\in\{1,2\}$
		or two-mode squeezers~$S_2(r)$ for~$r$ the single real squeezing parameter;
		these transformations introduce extra energy into the system.
		We summarise these transformations and their matrix representations
		in Appendix~\ref{sec:appA}.
		We also list a few standard decompositions
		in Gaussian quantum optics
		in Appendix~\ref{sec:appB};
		these will be used later in Secs.~\ref{bound} and~\ref{sec:results}.

		The covariance matrix~$\pi$ of a pure Gaussian state
		satisfies~$\det\pi=1$, whereas for a mixed Gaussian state~$\sigma$, we
		have~$\det\sigma > 1$.  Such a mixed state $\sigma$ can be decomposed into
		a pure state $\pi$ and some positive definite matrix,~$\phi >0$,
		representing noise as~$\sigma = \pi + \phi$, but this decomposition is
		not unique. Owing to this non-uniqueness, one way to extend a
		resource measure $\mathcal{F}$ defined for pure states to mixed
		states is by optimising over all possible pure state decompositions as follows
		\begin{equation}
			\mathcal{F}(\sigma) \coloneqq \min_{\pi\color{red}}\left\{
			\mathcal{F}(\pi) \, \vert\,  \sigma-\pi \geq 0\, , \det \pi =1
			\right\} ,
			\label{eq:convexroof}
		\end{equation}
		where the minimisation is over all pure states~$\pi$.
		Below we discuss two resource
		measures defined in this way---the squeezing of
		formation~$\mathcal{S}(\sigma)$ and the entanglement of formation
		potential~$\mathcal{P}(\sigma)$ of a Gaussian state~$\sigma$.

		\subsubsection{Squeezing of formation}

		The process of squeezing a Gaussian state's uncertainty
		below the standard quantum limit~\cite{wu1986generation},
		along one quadrature,
		is an active transformation.
		Operational measures of squeezing have been proposed~\cite{idel2016operational} in order to quantify the amount of squeezing in a state.
		One such measure called the squeezing of formation (SOF), denoted~$\mathcal{S}(\sigma)$,
		is defined as the minimum amount of local squeezing
		required to construct~$\sigma$
		starting from vacuum~\cite{idel2016operational}.
		For an~$N$-mode pure Gaussian state~$\pi$, this quantity is simply a function of the eigenvalues of~$\pi$,
		\begin{equation}
			\mathcal{S}(\pi) \coloneqq -\frac{1}{2} \sum_{j=1}^N   \ln \left ( \lambda_j^{\uparrow} (\pi) \right )  = \frac{1}{2} \sum_{j=N+1}^{2N} \ln \left ( \lambda_j^{\uparrow} (\pi) \right ) ,
		\end{equation}
		where~$\lambda_j^\uparrow(\pi)$ denotes the~$j^\text{th}$ lowest eigenvalue of~$\pi$.
		Straightforwardly, the SOF of a
		two-mode locally-squeezed vacuum
		with squeezing parameters~$r_1$ and~$r_2$
		is simply~$r_1+r_2$.
		Finally, for mixed states~$\sigma$, the SOF definition is then extended via
		\begin{equation}
			\mathcal{S}(\sigma) \coloneqq  \min_{\pi}\{ \mathcal{S}(\pi)
			\, \vert\,  \sigma-\pi \geq 0\,, \det \pi =1\},
			\label{eq:defSOF}
		\end{equation}
		where the minimisation is over all pure states~$\pi$.
		\subsubsection{Entanglement of formation potential}
		\label{subsec:EOFP}

		A two-mode Gaussian~$\sigma$ is separable if and only if its partial transpose, denoted~$\sigma^\Gamma$, is also a valid state, i.e.,
		\begin{equation}
			\nu_1^\uparrow(\sigma^\Gamma) \geq 1 ,
		\end{equation}
		a result known as the PPT condition~\cite{GKLC01}.
		In this case,
		$\sigma$ has zero entanglement
		irrespective of which entanglement measure is employed.
		However,
		for mixed entangled states,
		the various measures of entanglement,
		including logarithmic negativity~\cite{vidal2002computable}, entanglement of formation~\cite{bennett1996mixed}, distillable entanglement~\cite{bennett1996mixed}, and relative entropy of entanglement~\cite{VPRK97},
		are all in general inequivalent~\cite{AI05, VP00}.

		We limit our scope to two-mode Gaussian states,
		which can be treated as a bipartite system,
		and we choose the entanglement of formation (EOF), denoted~$\mathcal{E}(\sigma)$,
		as our entanglement measure~\cite{bennett1996mixed}.
		Conceptually,~$\mathcal{E}(\sigma)$ quantifies the minimum amount of entanglement required
		to produce the state~$\sigma$, assisted only by local operations and classical communication (LOCC).
		For pure states~$\pi$,~$\mathcal{E}(\pi)$ is defined to be the entropy of entanglement~\cite{HSH99, AI05},
		i.e.,
		\begin{equation}
			\mathcal{E}(\pi)\coloneqq \max\left\{0, h\left[\nu_1^{\uparrow} (\pi^\Gamma)\right]\right\},
		\end{equation}
		where~$h[\cdot]$ is an auxiliary function defined in Appendix~\ref{sec:appC}.
		Then, for mixed states~$\sigma$,
		the definition
		is extended via Eq.~\eqref{eq:convexroof} to~\cite{MM08, TR17, TOR19, WGKW+04}
		\begin{equation}
			\mathcal{E}(\sigma) \coloneqq  \min_{\pi}\{ \mathcal{E}(\pi)
			\, \vert\,  \sigma-\pi \geq 0\,, \det \pi=1\}.
			\label{eq:EOFDef}
		\end{equation}
		Note that Eq.~\eqref{eq:EOFDef} technically defines the Gaussian-EOF~\cite{WGKW+04}, which, in general, upper-bounds the EOF for multi-mode states, but coincides with the EOF for two-mode Gaussian states~\cite{akbari2015entanglement}.

		Next, the EOF potential $\mathcal{P}$ is defined as the maximum EOF a
		state can attain when transformed only by passive linear
		optics~\cite{tserkis2020maximum}. Specifically, starting from a two-mode Gaussian
		state~$\sigma$, with access to two ancillary vacuum modes, and
		four-mode passive transformations~$K$, the EOF potential is defined as
		\begin{equation}
			\mathcal{P}(\sigma) \coloneqq  \sup_{K} \left \{ \mathcal{E} ( \sigma') \, {\bigg \vert} \, \sigma' = \tr_2 \left [K (\sigma \oplus \mathds{1}_2) K^\top \right]\right \},
			\label{eq:EOFPdefn}
		\end{equation}
		so that~$\mathcal{E}(\sigma) \leq \mathcal{P}(\sigma)$ always.  In
		Eq.~\eqref{eq:EOFPdefn}, the~$\mathds{1}_2$ denotes two ancillary
		vacuum modes and the~$\tr_2$ denotes tracing out these
		modes.
		Interestingly,~$\mathcal{P}(\sigma) $ is upper-bounded by a simple
		function of~$\mathcal{S}(\sigma)$~\cite{tserkis2020maximum},
		\begin{equation}
			\mathcal{P}(\sigma) \leq h_0\left [\mathcal{S}(\sigma) \right ],
			\label{eq:defBOUND}
		\end{equation}
		where $h_0[\cdot]$ is a monotonic auxiliary function defined in Appendix~\ref{sec:appC}.
		However, the saturability of the bound in Eq.~\eqref{eq:defBOUND} for arbitrary~$\sigma$ remains an open problem.
		In this work, we
		provide an algorithm
		that aims to saturate this bound for arbitrary two-mode Gaussian states
		and then establish this saturability via extensive numerical testing.


		\subsection{Saturating the EOF Potential}\label{bound}

		In this section, we first introduce a special class of
		potential-saturating two-mode Gaussian states,~$\sigma_\mathrm{sp}$
		($\mathrm{sp}$ for special),
		which have~$\mathcal{E}(\sigma_\mathrm{sp})=\mathcal{P}(\sigma_\mathrm{sp})=h_0\left [\mathcal{S}(\sigma_\mathrm{sp})\right]$,
		and thus saturate the bound in Eq.~\eqref{eq:defBOUND}.
		We state this claim as a proposition and then prove it in Sec.~\ref{subsec:specialclassofstates}.
		Then,
		in Sec.~\ref{arbitrary}, we conjecture that any arbitrary two-mode Gaussian state can be passively transformed into this special class.
		In Sec.~\ref{subsec:unifiedalgorithm} we provide an explicit algorithm to perform this transformation.
		If our conjecture holds true,
		then~$\mathcal{P}(\sigma) = h_0\left [ \mathcal{S}(\sigma) \right]$ for all two-mode Gaussian states.

		\subsubsection{A special class of states}
		\label{subsec:specialclassofstates}
		Consider the two-mode Gaussian state
		\begin{equation}
			\sigma_\text{sp} =K_\text{bs} \big( \pi_d(r_1,r_2)+\lambda_1\phi_1+\lambda_2\phi_2\big) K_\text{bs}^\top \,,
			\label{eqn:sf}
		\end{equation}
		where~$\pi_\text{d}(r_1,r_2)$ represents a locally-squeezed two-mode
		pure state in diagonal form with squeezing parameters $r_1$ and $r_2$
		(matrix representation in Appendix~\ref{sec:appA}).
		Here $K_\text{bs}$
		denotes a balanced beam splitter operation with $\tau=1/2$,
		$\lambda_2 \geq \lambda_1 \geq 0$ are two  non-negative constants,
		and~$\phi_1=\ketbra{\phi_1}$
		and $\phi_2=\ketbra{\phi_2}$ are two orthogonal, positive
		semidefinite, rank-one matrices with
		\begin{align}
			\ket{\phi_1} =\begin{pmatrix*}[c]  \alpha \cos \theta\\\sin \theta\\
				\cos \theta\\
				\alpha  \sin \theta\end{pmatrix*}\;\text{and}\;
			\ket{\phi_2} =\begin{pmatrix*}[c]  \alpha \sin\theta \\-\cos\theta\\
				\sin\theta\\ - \alpha \cos\theta \end{pmatrix*}\,.
			\label{eq:noiseprojectormatrices}
		\end{align}
		In Eq.~\eqref{eq:noiseprojectormatrices},~$\alpha$ is a real parameter
		satisfying~$\vert \alpha \vert \leq e^{-r_1-r_2}$
		and~$\theta\in[0, 2\pi)$ is an angle.  The
		term~$\lambda_1 \phi_1 + \lambda_2 \phi_2$ can be thought of as
		correlated noise, parameterised by $\lambda_1$, $\lambda_2$, $\alpha$
		and $\theta$, added to the pure two-mode squeezed state $\pi_d$.  The
		terms $\lambda_1$ and~$\lambda_2$ denote the strength of the noise
		terms~$\phi_1$ and~$\phi_2$, respectively.
		The parameter~$\alpha$ determines the ratio
		between the added noise in the first and the second modes in the same quadrature, whereas the angle
		$\theta$ determines the ratio between the added noise in the $\hat{x}$ and $\hat{p}$
		quadratures in the same mode.  When $\lambda_1=\lambda_2$, the form of
		the added noise $\lambda_1 \phi_1 +\lambda \phi_2$ is special in the
		sense that the state~$\sigma_{\mathrm{sp}}$ becomes passively
		de-cross-correlatable, i.e., can be passively transformed into a
		de-cross-correlated state (recall that de-cross-correlated states have
		no correlations between the~$\hat{x}$ and~$\hat{p}$ quadratures, i.e.,
		$\langle \hat{x}_i \hat{p}_j + \hat{p}_i \hat{x}_j \rangle=0$
		for~$i, j \in \{1, 2\}$, see Appendix~\ref{sec:appA} for
		details). Overall, the state~$\sigma_\mathrm{sp}$ has~6 free
		parameters~$\{r_1, r_2, \lambda_1, \lambda_2, \alpha, \theta\}$ and
		thus may be thought of as an element from a six-parameter family of
		states.

		As we shall show in the following proposition, the state
		$\sigma_\mathrm{sp}$ is special in the sense that: $\sigma_\mathrm{sp}$ has the same SOF
		as $\pi_d$, the EOF of $\sigma_\mathrm{sp}$ saturates its EOF
		potential and $\sigma_\mathrm{sp}$ has the same EOF potential as $\pi_d$:
		\begin{equation}
			\mathcal{S}(\sigma_\text{sp}) =\mathcal{S}(\pi_d) \; \;
			\text{and} \; \; \mathcal{E}(\sigma_\text{sp}) = \mathcal{P}(\sigma_\text{sp}) = \mathcal{P}(\pi_d).
			\label{eq:specialSOFEOFequalpure}
		\end{equation}
		Moreover, the EOF and SOF properties of a pure state~$\pi_\text{d}$ are simply
		\begin{equation}
			\mathcal{S}(\pi_\text{d}) = r_1+r_2\quad \text{and}\quad \mathcal{P}(\pi_\text{d}) = h_0 \left [\mathcal{S}(\pi_d) \right ].
			\label{eq:EOFSOFpurediag}
		\end{equation}
		In other words, the upper bound for EOF in Eq.~\eqref{eq:defBOUND} is saturated for all such~$\sigma_\mathrm{sp}$.
		We now formally state and prove this claim.

		\begin{figure*}[btp]
			\centering
			\includegraphics[width=0.95\textwidth]{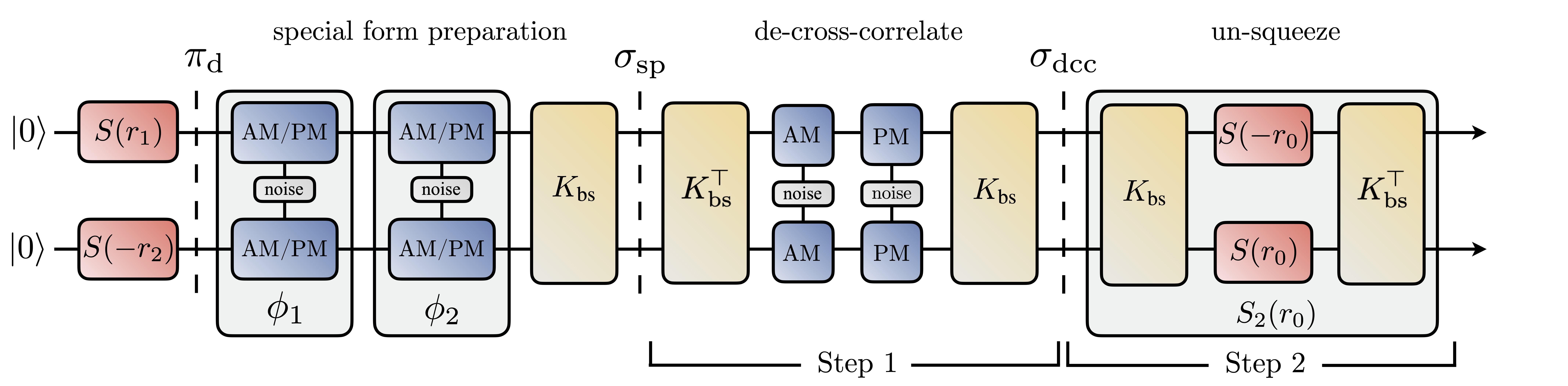}
			\caption{Schematic of the procedure to compute the EOF
				potential~$\mathcal{P}$ for a state $\sigma_\text{sp}$ in the special form given in Eq.~\eqref{eqn:sf}.
				Steps~1 and~2 from the proof of our proposition are also indicated.
				After adding a particular correlated noise to~$\sigma_\mathrm{sp}$ (step 1),
				the de-cross-correlated state~$\sigma_\mathrm{dcc}$ is then
				two-mode-squeezed to produce a separable state (step 2).
				The minimum value~$r_0$ of the two-mode squeezing parameter, such that the output state is separable, yields the lower bound~$h_0[2 r_0]$ to~$\mathcal{P}(\sigma_\mathrm{sp})$, as in Eq.~\eqref{eq:boundgreaterthan}.}
			\label{fig:2ms_schematic}
		\end{figure*}

		\begin{proposition*}
			\label{proposition1}
			For any state $\sigma_\mathrm{sp}$ of the form in Eq.~\eqref{eqn:sf},
			the EOF upper bound in Eq.~\eqref{eq:defBOUND} is saturated, i.e.,
			\begin{equation}
				\mathcal{E}(\sigma_\mathrm{sp})
				= \mathcal{P}(\sigma_\mathrm{sp})
				=h_0 \left [\mathcal{S}(\sigma_\mathrm{sp})\right]\; .
			\end{equation}
		\end{proposition*}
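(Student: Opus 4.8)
The plan is to lean on the universal bound already in hand. Combining the definition of the potential in Eq.~\eqref{eq:EOFPdefn} (for which the identity map is a feasible passive channel, so $\mathcal{E}(\sigma_{\mathrm{sp}})\le\mathcal{P}(\sigma_{\mathrm{sp}})$) with Eq.~\eqref{eq:defBOUND}, one has the chain $\mathcal{E}(\sigma_{\mathrm{sp}})\le\mathcal{P}(\sigma_{\mathrm{sp}})\le h_0[\mathcal{S}(\sigma_{\mathrm{sp}})]$. It therefore suffices to prove the two endpoint statements: (i) $\mathcal{S}(\sigma_{\mathrm{sp}})=r_1+r_2$, and (ii) $\mathcal{E}(\sigma_{\mathrm{sp}})\ge h_0[r_1+r_2]$. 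Once both hold the chain collapses, forcing $\mathcal{E}(\sigma_{\mathrm{sp}})=\mathcal{P}(\sigma_{\mathrm{sp}})=h_0[\mathcal{S}(\sigma_{\mathrm{sp}})]$ and, as byproducts, the auxiliary identities in Eqs.~\eqref{eq:specialSOFEOFequalpure} and~\eqref{eq:EOFSOFpurediag}.

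For claim (i) I would treat the two inequalities separately. The upper bound $\mathcal{S}(\sigma_{\mathrm{sp}})\le r_1+r_2$ follows by exhibiting an explicit competitor in the minimisation of Eq.~\eqref{eq:defSOF}: take the pure state $\pi=K_{\mathrm{bs}}\,\pi_d(r_1,r_2)\,K_{\mathrm{bs}}^\top$. Since $\lambda_1,\lambda_2\ge0$ and $\phi_1,\phi_2\ge0$, the residual $\sigma_{\mathrm{sp}}-\pi=K_{\mathrm{bs}}(\lambda_1\phi_1+\lambda_2\phi_2)K_{\mathrm{bs}}^\top\ge0$, so $\pi$ is admissible, and because $K_{\mathrm{bs}}$ is orthogonal-symplectic it leaves the spectrum of $\pi_d$ — hence $\mathcal{S}$ — invariant, giving $\mathcal{S}(\pi)=r_1+r_2$. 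The reverse bound $\mathcal{S}(\sigma_{\mathrm{sp}})\ge r_1+r_2$ is the more delicate half: here I would show that no admissible pure state can carry less squeezing, using the constraint $\lvert\alpha\rvert\le e^{-r_1-r_2}$ from Eq.~\eqref{eq:noiseprojectormatrices} to limit how far the correlated noise $\lambda_1\phi_1+\lambda_2\phi_2$ can depress the small eigenvalues of a feasible $\pi$, combined with the general eigenvalue lower bound on the SOF.

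The main obstacle is claim (ii), the lower bound on the convex-roof EOF of Eq.~\eqref{eq:EOFDef}, and here I would implement the two-step construction of Fig.~\ref{fig:2ms_schematic}. \textbf{Step 1:} apply the particular correlated-noise term (realised as a passive channel with ancillary vacua, in the spirit of Eq.~\eqref{eq:EOFPdefn}) that removes the $\hat{x}$–$\hat{p}$ correlations, producing a de-cross-correlated standard-form state $\sigma_{\mathrm{dcc}}$; the special shape of $\phi_1,\phi_2$ and the bound on $\alpha$ are precisely what make this step possible while preserving the relevant entanglement, so that $\mathcal{E}(\sigma_{\mathrm{sp}})\ge\mathcal{E}(\sigma_{\mathrm{dcc}})$. \textbf{Step 2:} compute the least two-mode-squeezing parameter $r_0$ that drives $\sigma_{\mathrm{dcc}}$ to the PPT boundary $\nu_1^\uparrow(\sigma^\Gamma)=1$, and verify from the symplectic invariants that $h_0[2r_0]=h_0[r_1+r_2]$. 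The decisive lemma is that for a de-cross-correlated two-mode Gaussian state this disentangling squeezing fixes the EOF exactly at $h_0[2r_0]$, i.e.\ the explicit pure-state decomposition attaining this value is optimal. I expect this optimality — the matching lower bound on a minimisation — to be where the real work lies, since a feasible decomposition on its own gives only an upper bound on $\mathcal{E}$; I would close it by invoking the known closed form of the Gaussian EOF for standard-form two-mode states and checking that its optimiser coincides with the decomposition delivered by the construction.

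Assembling the pieces, claim (ii) gives $\mathcal{E}(\sigma_{\mathrm{sp}})\ge h_0[r_1+r_2]=h_0[\mathcal{S}(\sigma_{\mathrm{sp}})]$ by claim (i), which together with the universal upper bound pins all three quantities to $h_0[r_1+r_2]$. Specialising to $\lambda_1=\lambda_2=0$ recovers $\mathcal{P}(\pi_d)=h_0[\mathcal{S}(\pi_d)]$ of Eq.~\eqref{eq:EOFSOFpurediag} and confirms $\mathcal{P}(\sigma_{\mathrm{sp}})=\mathcal{P}(\pi_d)$, completing the proof.
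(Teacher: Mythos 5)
Your proposal follows essentially the same route as the paper's proof: the chain $\mathcal{E}\le\mathcal{P}\le h_0[\mathcal{S}]$, the two-step construction of Fig.~\ref{fig:2ms_schematic} (add correlated noise to reach the de-cross-correlated state $\sigma_{\mathrm{dcc}}$, then identify the minimal disentangling two-mode squeezing $r_0=(r_1+r_2)/2$ via the PPT condition), and the final collapse of the chain. However, two repairs are needed, both in places you either flagged as delicate or misjustified.

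First, the ``more delicate half'' of your claim (i), namely $\mathcal{S}(\sigma_{\mathrm{sp}})\ge r_1+r_2$, is both unnecessary and the one step for which you have no workable argument: a direct lower bound on a convex-roof quantity like the SOF is genuinely hard, and your sketch (limiting how the noise can ``depress the small eigenvalues of a feasible $\pi$'') does not amount to a proof. You should simply drop it. Since $h_0$ is strictly monotonic, the easy feasibility bound $\mathcal{S}(\sigma_{\mathrm{sp}})\le r_1+r_2$ already yields $h_0[r_1+r_2]\le\mathcal{E}(\sigma_{\mathrm{sp}})\le\mathcal{P}(\sigma_{\mathrm{sp}})\le h_0\left[\mathcal{S}(\sigma_{\mathrm{sp}})\right]\le h_0[r_1+r_2]$, which forces equality throughout; the identity $\mathcal{S}(\sigma_{\mathrm{sp}})=r_1+r_2$ then falls out as a corollary rather than being an input. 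This is exactly how the paper argues: its Step 3 only exhibits a preparation of $\sigma_{\mathrm{sp}}$ using $r_1+r_2$ of local squeezing.

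Second, the noise addition in your Step 1 cannot be ``realised as a passive channel with ancillary vacua'': mixing with vacuum ancillas attenuates the signal quadratures in addition to injecting noise, so it does not implement the map $\sigma_{\mathrm{sp}}\mapsto\sigma_{\mathrm{sp}}+(\lambda_2-\lambda_1)K_{\mathrm{bs}}\phi_1 K_{\mathrm{bs}}^\top$. The correct (and simpler) justification is that adding any positive-semidefinite term to a covariance matrix is realised by classically correlated local Gaussian displacements, an LOCC operation, and hence cannot increase the EOF---this is the paper's ``adding noise cannot increase entanglement.'' With these two repairs, and granting the known prior result that the minimal disentangling two-mode squeezing lower-bounds the EOF of the de-cross-correlated state (a fact both you and the paper import from earlier work rather than prove), your argument coincides with the paper's.
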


		\begin{proof}
			The outline of our proof is as follows.
			By adding classical correlations in the form of noise to~$\sigma_\mathrm{sp}$,
			we get a state~$\sigma_\mathrm{dcc}$ that is de-cross-correlated.
			We then lower-bound~$\mathcal{E}(\sigma_\mathrm{dcc})$,
			which serves as a lower bound for~$\mathcal{E}(\sigma_\mathrm{sp})$
			and thus~$\mathcal{P}(\sigma_\mathrm{sp})$.
			Finally, we upper-bound~$\mathcal{S}(\sigma_\mathrm{sp})$ and show that
			this upper bound coincides with the lower bound for~$\mathcal{P}$,
			which along with Eq.~\eqref{eq:defBOUND} implies that~$\mathcal{P}(\sigma_\mathrm{sp}) = h_0\left[\mathcal{S}(\sigma_\mathrm{sp})\right]$.
			The proof presented below is broken up into three steps,
			and is illustrated in Fig.~\ref{fig:2ms_schematic} as a circuit diagram.

			Step 1:
			We first add some noise along~$K_\mathrm{bs} \phi_1  K_\mathrm{bs}^\top$
			to~$\sigma_\mathrm{sp}$ to get a de-cross-correlated state~$\sigma_\mathrm{dcc}$,
			\begin{align*}
				\sigma_\mathrm{dcc}& =\sigma_\text{sp}+(\lambda_2-\lambda_1)K_\text{bs} \phi_1 K_\text{bs}^\top \\
				&= K_\text{bs}(\pi_\text{d} + \lambda_2(\phi_1+\phi_2))K_\text{bs}^\top \\
				&= K_\text{bs} \left ( \pi_d + \lambda_2
				\begin{pmatrix}\alpha^2&0&\alpha&0\\
					0&1&0&\alpha\\
					\alpha&0&1&0\\
					0&\alpha&0&\alpha^2\end{pmatrix} \right)K_\text{bs}^\top\;.
			\end{align*}
			As adding noise cannot increase entanglement, we have
			\begin{equation}
				\label{eq:adding_noise}
				\mathcal{E}(\sigma_\mathrm{dcc}) \leq \mathcal{E}(\sigma_\text{sp}).
			\end{equation}

			Step 2: Next, we consider the least amount of two-mode
			squeezing,~$r_0$, required to un-squeeze~$\sigma_\mathrm{dcc}$
			into a separable state, i.e.,
			\begin{align}
				r_0 \coloneqq  \min_{r}\left\{
				r \Bigm| S_2(r) \sigma_\mathrm{dcc} S_2^\top(r) \;\text{separable}\right\}  \;.
			\end{align}
			Then~$h_0\left[2 r_0\right]$ is a lower bound to~$\mathcal{E}(\sigma_\mathrm{dcc})$.
			By checking the necessary and sufficient conditions for separability
			(see Sec.~\ref{subsec:EOFP}),
			we find that the state~$S_2(r)\sigma_\mathrm{dcc} S_2^\top(r)$
			is separable when
			\begin{equation}
				\frac{r_1 + r_2}{2} \leq r \leq \frac{1}{4}\sum_{j=1}^2 \log\left [\frac{\lambda_2+e^{2r_j}}{1+\lambda_2 \alpha^2 e^{2r_j}} \right ],
				\label{eq:rmin}
			\end{equation}
			so that~$r_0 = (r_1+r_2)/2$.
			Moreover, for the interval in Eq.~\eqref{eq:rmin} to be valid,
			we must have~$\vert \alpha \vert \leq e^{-r_1-r_2}$.
			The lower bound~$h_0[2 r_0] = h_0[r_1+r_2] \leq \mathcal{E}(\sigma_\mathrm{dcc})$
			from Eq.~\eqref{eq:rmin},
			when combined with Eq.~\eqref{eq:adding_noise},
			results in
			\begin{equation}
				h_0[r_1+r_2] \leq \mathcal{E}(\sigma_\mathrm{dcc}) \leq \mathcal{E}(\sigma_\text{sp}) \leq \mathcal{P}(\sigma_\text{sp})  .
				\label{eq:boundgreaterthan}
			\end{equation}
			Step 3:
			Finally, we observe that~$\sigma_\mathrm{sp}$ can clearly
			be produced with~$r_1+r_2$ amount of squeezing,
			so that $\mathcal{S}(\sigma_\text{sp})\leq r_1+r_2$.
			The monotonicity of~$h_0(\cdot)$ and Eq.~\eqref{eq:defBOUND}
			then allows us to upper-bound~$\mathcal{E}(\sigma_\mathrm{sp})$
			as
			\begin{equation}
				\mathcal{E}(\sigma_\text{sp})\leq  \mathcal{P}(\sigma_\mathrm{sp}) \leq h_0 \left [\mathcal{S}(\sigma_\text{sp})\right]\leq h_0[r_1+r_2].
				\label{eq:boundlessthan}
			\end{equation}
			Combining Eqs.~\eqref{eq:boundgreaterthan} and~\eqref{eq:boundlessthan}, we get
			\begin{align}
				\mathcal{E}(\sigma_\text{sp})=\mathcal{P}(\sigma_\text{sp})=h_0[r_1+r_2]\;\text{and}\;\mathcal{S}(\sigma_\text{sp})=r_1+r_2
				\,,
			\end{align}
			thus proving the proposition.
		\end{proof}

		The proposition above says that for states in the special form of Eq.~\eqref{eqn:sf},
		the upper bound~$h_0\left[\mathcal{S}(\cdot)\right]$
		(introduced in Ref.~\cite{tserkis2020maximum})
		on the EOF potential~$\mathcal{P}(\cdot)$
		is actually the true value of~$\mathcal{P}(\cdot)$.
		In other words, all states in this six-parameter family
		saturate the inequality in Eq.~\eqref{eq:defBOUND}.
		Notably, previously, only two three-parameter families
		of two-mode Gaussian states were known to possess this property:
		symmetric states and balanced correlated states~\cite{tserkis2020maximum}.

		\subsubsection{Extension to all two-mode Gaussians}\label{arbitrary}

		Let us now denote by~$G$ the set of all states in the special form of Eq.~\eqref{eqn:sf}.
		Suppose a state~$\sigma'$ is not in this set~$G$,
		but on applying some passive transformation~$K'$
		transforms into the special form,
		i.e.,
		\begin{equation}
			\sigma'\not\in G \quad \mathrm{but} \quad K' \sigma' K'^\top \in G.
		\end{equation}
		As passive transformations by definition do not change the EOF potential of a state~\cite{tserkis2020maximum},
		we must have
		\begin{equation}
			\mathcal{P}(\sigma')= \mathcal{P}(K'\sigma'K'^\top) = h_0 \left [ \mathcal{S}(K'\sigma'K'^\top) \right ].
		\end{equation}
		Moreover, passive transformations also leave the SOF invariant~\cite{idel2016operational},
		so~$\mathcal{S}(K'\sigma'K'^\top) = \mathcal{S}(\sigma')$. Thus, we have
		\begin{equation}
			\mathcal{P}(\sigma') = h_0 \left [ \mathcal{S}(\sigma')\right ],
			\label{eq:boundsatforsigmadashed}
		\end{equation}
		indicating that~$\sigma'$ too
		saturates the inequality in Eq.~\eqref{eq:defBOUND}
		despite
		not being in the set~$G$.  By a similar line of reasoning, it follows
		that for any state~$\sigma'\not\in G$, if we can add  some
		noise~$\phi'$ such that its SOF remains unchanged, i.e.,~$\mathcal{S}(\sigma')
		= \mathcal{S}(\sigma'+\phi')$, and the resulting state is in the
		special form, i.e.,~$\sigma'+\phi' \in G$, then $\sigma'$ must
		also satisfy Eq.~\eqref{eq:boundsatforsigmadashed}.

		It is then evident that any state that can be transformed into~$G$ by
		either passive transformations, or the addition of noise that keeps
		the SOF constant, or both,
		must also saturate the upper bound
		in Eq.~\eqref{eq:defBOUND}.
		We
		conjecture that all
		two-mode Gaussian states can be transformed into~$G$ in this way.

		\begin{conjecture*}
			Any two-mode Gaussian state~$\sigma_\mathrm{in}$ can be
			transformed into some element~$\sigma_\mathrm{out}$ in~$G$,
			without increasing its SOF, via only passive transformations, the addition of noise and
			access to ancillary vacuum modes.
			\label{conjecture1}
		\end{conjecture*}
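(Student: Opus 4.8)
The plan is to reduce the conjecture to a statement purely about reshaping \emph{noise}, and then to attack that reshaping problem using the optimality conditions that define the squeezing of formation. Since passive transformations leave both $\mathcal{S}$ and the potential $\mathcal{P}$ invariant, and since every pure two-mode Gaussian state is passively equivalent to a diagonal locally-squeezed state $\pi_d$ (writing a pure covariance matrix as $SS^\top$ and absorbing the passive factors of the Bloch--Messiah decomposition, so that $\pi^\ast = O\,\pi_d\,O^\top$ with $O$ passive), I would first fix an SOF-optimal pure state. Concretely I would take the decomposition $\sigma_\mathrm{in} = \pi^\ast + \phi^\ast$ achieving the minimum in Eq.~\eqref{eq:defSOF}, so that $\mathcal{S}(\sigma_\mathrm{in}) = \mathcal{S}(\pi^\ast) = r_1 + r_2$, and then apply the passive map sending $\pi^\ast \mapsto \pi_d(r_1,r_2)$. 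After this alignment $\sigma_\mathrm{in}$ becomes $\pi_d + \psi$ with $\psi \geq 0$, and the conjecture is equivalent to the claim that, using only the remaining allowed operations, $\psi$ can be steered into the special correlated form $\lambda_1\phi_1 + \lambda_2\phi_2$ (in the frame conjugated by $K_\mathrm{bs}$) while keeping $\pi_d$ an SOF-optimal pure state, so that $\mathcal{S}$ stays equal to $r_1+r_2$.

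Next I would fix which moves remain for this reshaping and count degrees of freedom. The target $\lambda_1\phi_1 + \lambda_2\phi_2$ is a rank-$\le 2$ matrix carrying only the four parameters $\{\lambda_1,\lambda_2,\alpha,\theta\}$, whereas a generic residual $\psi$ is a full-rank positive matrix, and the passive stabiliser of $\pi_d$ is essentially discrete when $r_1\neq r_2$; hence passive freedom alone cannot absorb the mismatch. The two genuinely dimension-changing tools are therefore (i) SOF-preserving addition of classical noise, which can only enlarge $\psi$ and hence requires $\lambda_1\phi_1 + \lambda_2\phi_2 \geq \psi$ (forcing $\mathrm{range}(\psi)\subseteq\mathrm{span}\{\ket{\phi_1},\ket{\phi_2}\}$, since $A\ge B\ge 0$ implies $\ker A\subseteq\ker B$), and (ii) appending vacuum ancillas, applying a global passive rotation, and tracing out, which is the only operation able to \emph{reduce} the rank or reorient the support of $\psi$ --- exactly what is needed for isotropic states such as $\nu\mathds{1}_4$, whose optimal noise is full rank. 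Note that adding noise can only lower $\mathcal{S}$, so the binding constraint is not to drop $\mathcal{S}$ below $r_1+r_2$; the reshaping problem is thus to show that $\mathrm{range}(\psi)$ can always be driven into the two-parameter family of two-planes swept by $\{\ket{\phi_1},\ket{\phi_2}\}$ and dominated there, at fixed $\mathcal{S}$.

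The key technical input I would develop is a structural description of the optimal noise $\psi$ from the stationarity (Karush--Kuhn--Tucker) conditions of the minimisation in Eq.~\eqref{eq:defSOF}. I expect these conditions to pin the support of $\psi$ onto the ``soft'' directions of $\pi_d$ in a controlled, mode-by-mode way and to tie its eigenvectors to the squeezing axes of $\pi_d$ --- precisely the data encoded by $\alpha$ (the inter-mode noise ratio in a fixed quadrature) and $\theta$ (the intra-mode $\hat{x}$--$\hat{p}$ ratio). If the optimal support is always a two-plane of the special form, domination by $\lambda_1\phi_1 + \lambda_2\phi_2$ closes the argument; if it is higher rank, I would invoke the ancilla-and-trace step to project it onto such a two-plane without lowering $\mathcal{S}$, using the proposition (which certifies that every special-form state has $\mathcal{S}=r_1+r_2$ with saturating entanglement) to guarantee that the landing point is admissible and satisfies $|\alpha|\le e^{-r_1-r_2}$.

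The step I expect to be the main obstacle is exactly this matching: proving that the SOF-optimal noise of an \emph{arbitrary} two-mode Gaussian state can always be reoriented and rank-reduced into the $\{\alpha,\theta\}$ family while \emph{simultaneously} preserving $\mathcal{S}$ exactly, not increasing $\mathcal{E}$ (automatic for the chosen operations, but restricting which reductions are allowed), and respecting $|\alpha|\le e^{-r_1-r_2}$. The difficulty is structural: the SOF-optimal decomposition need not be unique and its noise support ranges over a four-dimensional Grassmannian of two-planes, of which the special form is only a two-parameter slice, so one must show that the optimality conditions collapse the support onto that slice (or that ancilla-assisted reductions can always accomplish the collapse). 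It is the absence of a closed-form proof of this collapse in full generality --- rather than any doubt about particular families --- that forces reliance on the numerical evidence presented in Sec.~\ref{sec:results}.
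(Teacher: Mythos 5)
First, a point of framing: the statement you attempted is the paper's \emph{conjecture}, which the authors explicitly do not prove; what stands in place of a proof is the constructive procedure of Alg.~\ref{algorithm:MaximizeEOF} plus numerical verification on $10^6$ random states in Sec.~\ref{sec:results}. Your proposal ends in the same epistemic position --- you concede that the ``collapse'' of the noise support onto the special slice lacks a closed-form argument and defer to numerics --- so neither text is a proof, and the comparison is between two incomplete strategies. At the structural level yours is the same strategy as the paper's: fix an SOF-optimal decomposition $\sigma_\mathrm{in}=\pi^\ast+\phi^\ast$ from Eq.~\eqref{eq:defSOF}, passively align $\pi^\ast$ with $\pi_d(r_1,r_2)$ via Bloch--Messiah, then reshape the residual noise into the form of Eq.~\eqref{eqn:sf} using noise addition and vacuum ancillas. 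That is exactly steps 2--5 of Alg.~\ref{algorithm:MaximizeEOF}, and your supporting observations are sound: pure noise addition requires the target noise to dominate $\psi$ and hence fails for full-rank residuals (e.g., thermal states $\nu\mathds{1}_4$); the ancilla-mix-and-trace step is the only rank-reducing tool; and the binding constraint is that $\mathcal{S}$ must not fall below $r_1+r_2$, since these operations cannot raise it.

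Where you and the paper diverge is the second half, and there the paper is substantially more concrete. You propose an uncarried-out KKT characterization of the SOF-optimal noise; the paper instead makes two specific moves. First, it adds $\phi_\mathrm{extra}=(\lambda_1-\lambda_2)\ketbra{\phi_2}{\phi_2}$, where $\lambda_1\geq\lambda_2$ are the two largest eigenvalues of the residual $\phi_\mathrm{diag}$ --- i.e., it \emph{equalizes the top two noise eigenvalues}, which is precisely the degenerate case $\lambda_1=\lambda_2$ that the paper singles out (after Eq.~\eqref{eq:noiseprojectormatrices}) as rendering a special-form state passively de-cross-correlatable. Second, it invokes the empirically observed (and unproven) fact that the resulting state is \emph{always} passively de-cross-correlatable by a single-mode rotation, after which a beam splitter mixing one mode with a vacuum ancilla, at numerically optimized transmissivity, strips the remaining excess noise (which is still generically of rank $>2$ at that stage) and lands the state in $G$. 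This eigenvalue-equalization trick and the empirical de-cross-correlatability are the substantive content your sketch is missing; conversely, your KKT program, if executed, would be an analytic replacement for exactly that empirical step, and hence a genuine strengthening of the paper. One smaller correction: your claim that the conjecture is ``equivalent'' to steering $\psi$ while keeping $\pi_d$ fixed is only a sufficiency --- the allowed operations, in particular the final vacuum mixing, need not preserve the pure part of the decomposition, and the paper's own algorithm alters it.
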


		From the discussion in Sec.~\ref{subsec:specialclassofstates},
		we know that our conjecture, if true,
		would immediately imply that~$\mathcal{P}(\sigma_\mathrm{in})=h_0\left [ \mathcal{S}(\sigma_\mathrm{in})\right ]$
		for all two-mode Gaussian states~$\sigma_\mathrm{in}$.
		In this work, we do not formally prove our conjecture---instead,
		we provide evidence for the conjecture in the following way.
		First we present the transformation~$\sigma_\mathrm{in} \mapsto \sigma_\mathrm{out}$
		mentioned in the conjecture
		as an algorithm (Alg.~\ref{algorithm:MaximizeEOF} in Sec.~\ref{subsec:unifiedalgorithm}).
		Algorithm~\ref{algorithm:MaximizeEOF} takes~$\sigma_\mathrm{in}$ as input,
		and after performing passive operations, adding noise, and adding and then discarding an ancillary vacuum mode,
		the algorithm outputs the transformed state~$\sigma_\mathrm{out} \in G$.
		Next, we numerically ran our algorithm on~$10^6$ random inputs~$\sigma_\mathrm{in}$,
		and calculate the EOF of the output $\mathcal{E}(\sigma_\text{out})$ and compare
		that to the SOF of the input $\mathcal{S}(\sigma_\text{in})$.
		We verify that ~$\mathcal{E}(\sigma_\mathrm{out})=h_0[
		\mathcal{S}(\sigma_\mathrm{in})]$ holds true for every input
		state to within numerical tolerances.

		\subsubsection{Algorithm: Passive operations to maximise EOF}
		\label{subsec:unifiedalgorithm}
		We now propose an algorithm that, starting from any arbitrary two-mode
		Gaussian~$\sigma_\mathrm{in}$,
		outputs a potential-saturating
		two-mode Gaussian~$\sigma_\mathrm{out}$
		such that~$\mathcal{E}(\sigma_\mathrm{out})=\mathcal{P}(\sigma_\mathrm{out})=h_0
		\left [\mathcal{S}(\sigma_\mathrm{out}) \right ]$
		while keeping the SOF
		constant,
		i.e.,
		$\mathcal{S}(\sigma_\text{out})=\mathcal{S}(\sigma_\text{in})$.
		In doing so, the
		algorithm only performs passive operations and adds noise to the input
		state so
		that~$\mathcal{P}(\sigma_\mathrm{out}) \leq
		\mathcal{P}(\sigma_\mathrm{in})$.
		As a result, our algorithm establishes the fact that
		$\mathcal{P}(\sigma_\mathrm{in})= h_0 \left [ \mathcal{S}(\sigma_\mathrm{in})\right ]$
		for any arbitrary two-mode Gaussian~$\sigma_\mathrm{in}$.
		The fundamental idea behind the algorithm is to
		decouple the squeezing between the two modes of~$\sigma_\mathrm{in}$,
		and then mix the two modes on a balanced beam splitter.
		The resulting de-cross-correlated state, with two identical modes,
		is known to be potential-saturating and also saturates the EOF bound
		in Eq.~\eqref{eq:defBOUND}
		(see Appendix~\ref{sec:appC}).

		The first step in the algorithm is
		to find an optimal pure state~$\pi_\mathrm{opt}$
		that has the same SOF as~$\sigma_\mathrm{in}$ from Eq.~\eqref{eq:defSOF},
		i.e.,~$\mathcal{S}(\sigma_\mathrm{in}) = \mathcal{S}(\pi_\mathrm{opt})$;
		in Alg.~\ref{algorithm:MaximizeEOF}, we denote this procedure as~\textsc{OptSOFState}$(\sigma_\mathrm{in})$~\cite{idel2016operational}.
		Next,~\textsc{BMDecomp}$(\pi_\mathrm{opt})$ leverages the Bloch-Messiah decomposition to find
		a passive transformation~$K_\mathrm{BM}$ that
		that diagonalises~$\pi_\mathrm{opt}$ to~$\pi_\mathrm{diag}$
		(see Appendix~\ref{sec:appB} for details).
		Applying~$K_\mathrm{BM}$ to the mixed state~$\sigma_{\mathrm{in}}=\pi_\mathrm{opt}+\phi$
		yields the mixed state~$\sigma_{\mathrm{diag}}=\pi_\mathrm{diag} + \phi_\mathrm{diag}$
		(note that~$\sigma_{\mathrm{diag}}$ and~$\phi_{\mathrm{diag}}$ are not diagonal).
		In the second step, we calculate the eigenvalues~$\{\lambda_j\}$
		(arranged in decreasing order)
		and eigenvectors~$\{\ket{\phi_j}\}$ of the matrix~$\phi_\mathrm{diag}$
		via the procedure~\textsc{Spectrum}$(\phi_\mathrm{diag})$.
		Then we compute the extra noise term~$\phi_\mathrm{extra} = (\lambda_1 - \lambda_2) \ketbra{\phi_2}{\phi_2}$,
		which, when added to~$\sigma_\mathrm{diag}$,
		gives us the state~$\sigma'=\sigma_\mathrm{diag}+\phi_\mathrm{extra}$.

		Surprisingly, we find that the state~$\sigma'$
		at this point in the algorithm
		can always be passively de-cross-correlated.
		This is not true, in general, for mixed Gaussian states.
		Nevertheless, for all~$\sigma_\mathrm{in}$,
		$K_\mathrm{BM} \sigma_\mathrm{in} K_\mathrm{BM}^\top + \phi_\mathrm{extra}$
		becomes a passively de-cross-correlatable state---this
		is crucial because de-cross-correlated states are
		optimal for the EOF potential (see Appendix~\ref{sec:appC}).
		This passive transformation,
		which is simply a phase rotation on one mode,
		is calculated in the procedure~\textsc{DeCrossCorrelate}$(\cdot)$
		by numerically finding the angle~$\theta^*\in[0, 2 \pi)$ and mode~$i^*\in\{1, 2\}$ to be rotated
		to make~$\sigma'$ de-cross-correlated.
		The last step in the algorithm comprises mixing one of the modes
		of the de-cross-correlated state~$\sigma_\mathrm{rot}$
		with a third ancillary vacuum mode on a beam splitter;
		this is done to remove noise from~$\sigma_\mathrm{rot}$.
		The transmissivity~$\tau^*\in[0, 1]$ for this beam splitter
		operation~$K_\mathrm{bs}^{3, j^*}$
		and the mode~$j^*\in\{1,2\}$ to be mixed with vacuum are
		calculated numerically by maximising the EOF
		of the resulting state.
		Details of the numerical procedure for calculating EOF
		are presented in Sec.~\ref{sec:results}.
		This final state is output as~$\sigma_\mathrm{out}$
		by Alg.~\ref{algorithm:MaximizeEOF}, which we present below in full.

		\begin{algorithm}[H]
			\caption{Maximizing EOF}
			\label{algorithm:MaximizeEOF}
			\begin{algorithmic}[1]
				\Require
				\Statex $\sigma_{\mathrm{in}}$
				\Comment Two-mode mixed Gaussian state\;
				\Ensure
				\Statex $\sigma_{\mathrm{out}}$
				\Comment State which saturates SOF-EOF bound
				\Procedure{\textsc{MaxEOF}}{$\sigma_\mathrm{in}$}
				\State $\pi_\mathrm{opt} \gets \textsc{OptSOFState}(\sigma_\mathrm{in}) $\;
				\State $\{K_\mathrm{BM}, \pi_\mathrm{diag}\} \gets \textsc{BMDecomp}(\pi_\mathrm{opt})$
				\State $\sigma_\mathrm{diag} \gets K_\mathrm{BM} \sigma_\mathrm{in} K_\mathrm{BM}^\top$\;
				\State $\phi_\mathrm{diag} \gets \sigma_\mathrm{diag} - \pi_\mathrm{diag}$\; \label{step:5}
				\State $\{\lambda_j, \ket{\phi_j} \}_{j=1\,\text{to}\,4} \gets \textsc{Spectrum}(\phi_\mathrm{diag})$
				\Comment $\lambda_j\geqslant \lambda_{j+1}$\;
				\State $\phi_\mathrm{extra} \gets (\lambda_1 - \lambda_2) \ketbra{\phi_2}{\phi_2}$\;
				\State $\sigma' \gets \sigma_\mathrm{diag}+ \phi_\mathrm{extra}$
				\Comment Add extra noise
				\State $\{\theta^*, i^* \}\gets \textsc{DeCrossCorrelate}(\sigma')$
				\Comment Find~$\theta^*\in[0, \pi]$ and mode~$i^*\in\{1, 2\}$\;
				\State $\sigma_\mathrm{rot} \gets {K_\mathrm{rot}^{i^*}}(\theta^*) \, \sigma' \, {K_\mathrm{rot}^{i^*}}^\top(\theta^*)$
				\Comment De-cross-correlate by rotating mode~$i^*$ by~$\theta^*$
				\State $\sigma_3\gets\textsc{AddMode3}(\sigma_\mathrm{rot})$
				\Comment Add third ancillary mode\;
				\State $\{\tau^*, j^* \} \gets \max_{\tau, j} \textsc{EOF}\left[K_\mathrm{bs}^{3,j} (\tau) \sigma_3 \left ({K_\mathrm{bs}^{3,j}}(\tau)\right)^\top\right]$
				\Comment Find $\tau^* \in [0, 1]$ to mix mode~$j^* \in \{1, 2\}$ with mode~3\;
				\State$ \sigma_\mathrm{out}\gets \textsc{RemoveMode3}\left( K_\mathrm{bs}^{3, j^*}(\tau^*) \sigma_3 \left ({K_\mathrm{bs}^{3,j^*}}(\tau^*)\right)^\top \right)$ \label{step:13}
				\Comment Mix modes and discard ancillary third mode
				\EndProcedure
			\end{algorithmic}
		\end{algorithm}

		We note that for states~$\sigma_\mathrm{in}$
		with both modes squeezed,
		steps~\ref{step:5} through to~\ref{step:13} may be skipped in
		Alg.~\ref{algorithm:MaximizeEOF},
		and instead a final balanced beam splitter~$K_\mathrm{bs}$ suffices to bring~$\sigma_\mathrm{in}$ into~$G$.
		More precisely,
		\begin{equation}
			\sigma_\mathrm{out} = K_\mathrm{bs} K_\mathrm{BM} \sigma_\mathrm{in} K_\mathrm{BM}^\top K_\mathrm{bs} ^\top \in G.
		\end{equation}
		Thus~$K_\mathrm{bs} K_\mathrm{BM}$ is the passive transformation that maximizes the EOF of~$\sigma_\mathrm{in}$,
		or, alternatively, transforms~$\sigma_\mathrm{in}$ into the set~$G$.

		\subsection{Numerical Simulations}
		\label{sec:results}

		\begin{figure}[btp]
			\centering
			\begin{tikzpicture}
				\node [anchor=south west,inner sep=0] (image) at (0,0) {\includegraphics[width=180pt]{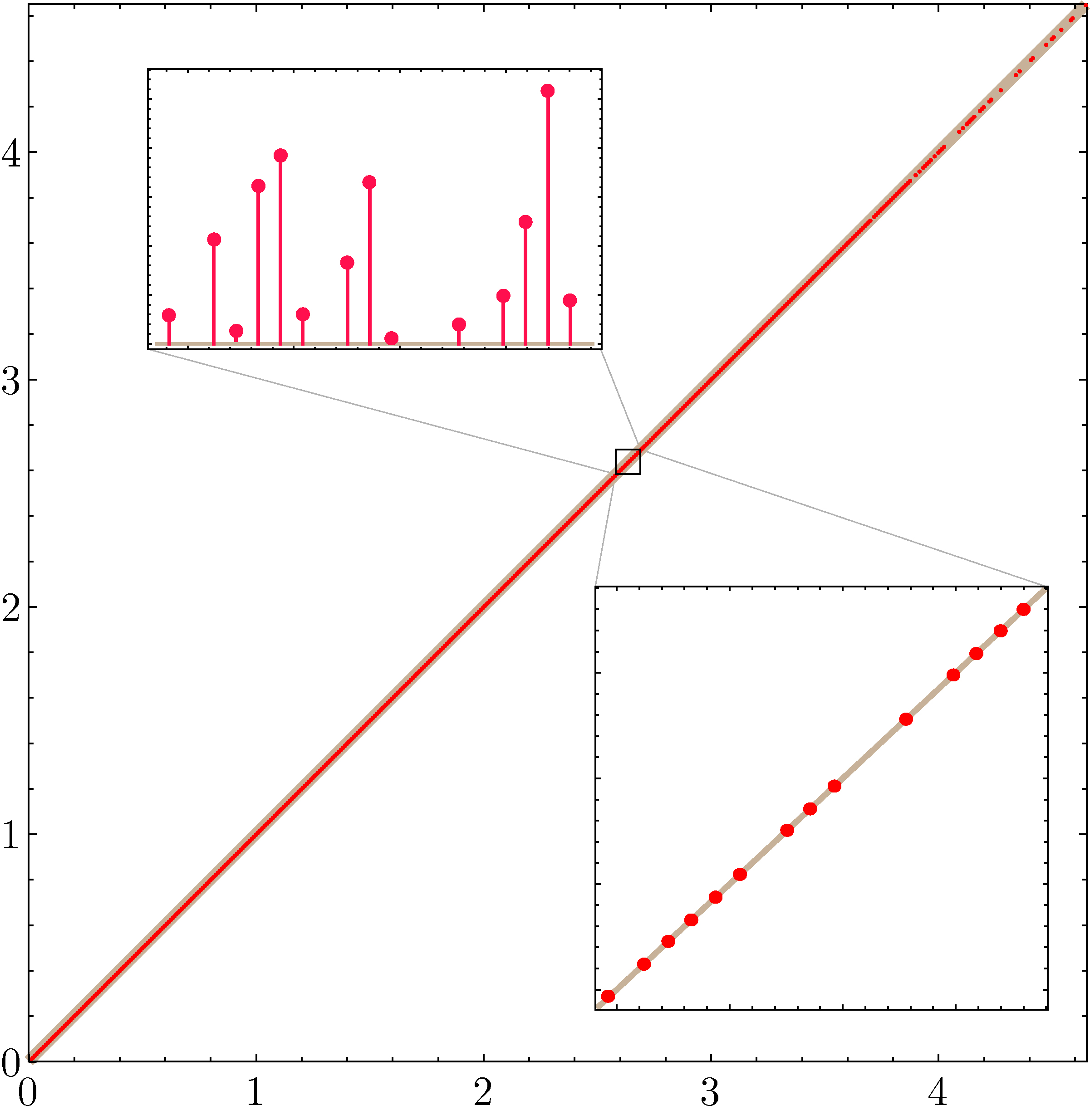}};
				\begin{scope}[
					x={(image.south east)},
					y={(image.north west)}
					]
					\node [black] at (0.5, -0.07) {$h_0[\mathcal{S}(\sigma_\mathrm{in})]$};
					\node [black,rotate=90] at (-0.07, 0.5) {$\mathcal{E}(\sigma_\mathrm{out})$};
					\node [black,scale=0.8] at (0.35, 0.66) {$\bar{\mathcal{E}}$};
					\node [black,rotate=90,scale=0.8] at (0.08, 0.8) {$ E$};
					\node [black,scale=0.6] at (0.12, 0.692) {$0$};
					\node [black,scale=0.6] at (0.12, 0.78) {$2$};
					\node [black,scale=0.6] at (0.12, 0.87) {$4$};
					\node [black,scale=0.6] at (0.16, 0.956) {$\!\times\!10^{-10}$};
					\node [black,scale=0.7] at (0.14, 0.668) {$s_0$};
					\node [black,scale=0.7] at (0.54, 0.67) {$s_0\!+\!\delta$};
					\node [black,scale=0.75] at (0.52, 0.11) {$s_0$};
					\node [black,scale=0.75] at (0.555, 0.076) {$s_0$};
					\node [black,scale=0.75] at (0.497, 0.458) {$s_0\!+\!\delta$};
					\node [black,scale=0.75] at (0.94, 0.076) {$s_0\!+\!\delta$};
				\end{scope}
			\end{tikzpicture}
			\caption{Numerical results from running Alg.~\ref{algorithm:MaximizeEOF} on a million random two-mode Gaussian states.
				The output state's~$\mathcal{E}$ and the input state's~$h_0(\mathcal{S})$ values coincide (red dots) and, thus,
				lie on the~$Y=X$ line (thick, gray) to within numerical tolerance. The bottom inset magnifies the section
				$[s_0, s_0+\delta]$ (where~$s_0=2.6430777$ and~$\delta=4.1\!\times\!10^{-6}$) of the main plot.
				The top inset rotates this same section, by plotting the error~$E=\mathcal{E}-h_0[\mathcal{S}]$ against~$\bar{\mathcal{E}}=(\mathcal{E}+h_0[\mathcal{S}])/2$. Over a million runs, the average absolute error~$\vert E \vert_{\mathrm{avg}}$ is~$1.93\!\times\!10^{-9}$.
			}
			\label{fig:3NumericalVerification}
		\end{figure}

		In order to support our conjecture,
		we numerically apply Alg.~\ref{algorithm:MaximizeEOF}
		to $10^6$ randomly generated two-mode Gaussian states.
		This random generation leverages Williamson's decomposition
		(see Appendix~\ref{sec:appB})
		by applying random active and passive operations on
		randomly generated two-mode thermal states.
		For each randomly generated instance,
		its SOF and the corresponding optimum pure state is computed numerically, based on an algorithm provided in Ref.~\cite{idel2016operational} with a numerical accuracy of $10^{-8}$.
		Then, this state is transformed according to Alg.~\ref{algorithm:MaximizeEOF},
		and the EOF of the output state is calculated.
		For arbitrary two-mode Gaussian states,
		there are several equivalent approaches (but no simple analytical expression)
		to calculate the Gaussian EOF~\cite{WGKW+04,MM08,TOR19,AI05,IS08}.
		We used the approach from Ref.~\cite{AI05} to compute Gaussian EOFs in this work.

		By testing on~$10^6$ such randomly generated two-mode Gaussian states,
		we see that the difference between the
		EOF~$\mathcal{E}(\sigma_\mathrm{out})$
		and the upper
		bound~$h_0\left[\mathcal{S}\left(\sigma_\mathrm{in}\right)\right]$ is
		always lower than numerical tolerance.  The average absolute
		error~$\left\vert \mathcal{E}-h_0[\mathcal{S}] \right\vert$ over a million runs
		is~$1.93\!\times\!10^{-9}$.

		We also explicitly verify that Alg.~\ref{algorithm:MaximizeEOF}
		does not change the SOF of the input state,
		i.e.,
		$\mathcal{S}(\sigma_\mathrm{in}) = \mathcal{S}(\sigma_\mathrm{out})$.
		The results from this test are shown in Fig.~\ref{fig:3NumericalVerification},
		where the straight line plot between~$\mathcal{E}$ and~$h_0[\mathcal{S}]$
		provides strong evidence supporting our conjecture.

		Based on our proposition,
		and the numerical results supporting our conjecture shown in Fig.~\ref{fig:3NumericalVerification},
		it follows that the EOF potential of all
		two-mode Gaussian states is a monotonic function of the state's SOF.
		Qualitatively, this means the maximum EOF, when restricted to linear
		passive optics, is completely determined by the
		minimum amount of local squeezing required for state preparation.
		Conversely, to increase EOF beyond this value, further squeezing
		operations are necessarily required.

		\section{Discussion}\label{conclusion}
		In this work, we have studied the relation between the squeezing of formation and the maximum entanglement of formation under passive operations for two-mode Gaussian states.
		We have characterised a special six-parameter family of two-mode states,
		which are potential-saturating and also saturate the SOF-EOF bound.
		Moreover, we have conjectured that any arbitrary two-mode Gaussian state
		can be passively transformed into the aforementioned family.
		In support of our conjecture, we have proposed an algorithm
		to passively transform arbitrary two-mode Gaussian states into this special class.
		Finally, we report numerical results from
		simulating this algorithm on a million random instances,
		which supports our conjecture.

		In conclusion, we claim that the entanglement potential
		for all two-mode Gaussian states
		is completely determined by the
		minimum amount of squeezing
		required to construct the state.
		By connecting an operational measure for squeezing
		to one for entanglement,
		our work establishes a satisfying
		link between the resource theories of squeezing and entanglement.
		Furthermore, being restricted solely to passive linear optics,
		the steps in our proposed algorithm are practically feasible in experimental setups.
		As an example application, our results could be used to
		quantify and compare the entangling capabilities of different experimental setups.

		Our work draws a natural conclusion to
		the line of research investigating the relationship
		between entanglement potential and squeezing
		for two-mode Gaussian states.
		As both these quantities can be
		extended to multi-mode states,
		the validity of the SOF-EOF bound
		and its saturability remain open problems in the greater-than-two-mode case.
		Notably, in this case, the Gaussian EOF
		and the EOF do not coincide so
		the entanglement potential must be redefined carefully~\cite{akbari2015entanglement,WGKW+04}.

		\section*{DATA AVAILABILITY}
		The data that support the findings of this study are
		available from the corresponding author upon reasonable
		request.

		\section*{CODE AVAILABILITY}
		The codes that support the findings of this study are
		available from the corresponding author upon reasonable
		request.

		\section*{Acknowledgements}
		The authors are grateful to Prof.\ Tim Ralph for helpful discussions.
		This research is supported by the Australian Research
		Council (ARC) under the Centre of Excellence for Quantum
		Computation and Communication Technology~CE170100012.
		Authors S.T.\ and P.N.\ acknowledge support for this work from the National Science Foundation under grant number NSF CNS 2106887 on ``U.S.-Ireland R\&D Partnership: Collaborative Research: CNS Core: Medium: A unified framework for the emulation of classical and quantum physical layer networks'' and the NSF QuIC-TAQS program ``QuIC-TAQS: Deterministically Placed Nuclear Spin Quantum Memories for Entanglement Distribution'' under grant number NSF OMA 2137828.

		\section*{Author contributions statement}
		S.T.\ and S.M.A.\ conceived the project.
		B.L., S.T., A.D.\ and S.M.A.\ developed the theory
		and performed the numerical simulation.
		A.D.\ wrote the manuscript.
		All authors contributed to discussions regarding the results of the paper.
		P.K.L.\ supervised the project.

		\section*{Additional information}
		The Authors declare no Competing Financial or Non-Financial Interests.

		\appendix

		\section{Gaussian transformations and their matrix representations}
		\label{sec:appA}
		In this section, we briefly review some common Gaussian transformations.
		Gaussian transformations are represented by symplectic matrices~$S$,
		which satisfy~$S \Omega S^\top = \Omega$,
		where
		\begin{equation}
			\Omega =  \bigoplus_{j=1}^N \begin{pmatrix} 0  & 1   \\ -1 & 0\end{pmatrix}
		\end{equation}
		is the symplectic form.
		Two-mode beam splitters are represented by
		\begin{equation}
			K_\mathrm{bs}(\tau) = \begin{pmatrix} \sqrt\tau \; \mathds{1}_2  & \sqrt{1-\tau}  \; \mathds{1}_2 \\ -  \sqrt{1-\tau}  \; \mathds{1}_2 &\sqrt\tau \;  \mathds{1}_2\end{pmatrix},
		\end{equation}
		where~$\mathds{1}_N$ represents the identity operator on~$N$ dimensions.
		For balanced beam splitters, where~$\tau=1/2$, we drop the explicit dependence on~$\tau$ and simply write~$K_\mathrm{bs}$.
		Single-mode phase rotations are represented via the rotation matrices,
		\begin{equation}
			K_\mathrm{rot}(\theta) = \begin{pmatrix} \cos\theta  & \sin \theta \\ - \sin \theta & \cos\theta \end{pmatrix}.
		\end{equation}
		Beam splitters and phase rotations constitute the fundamental passive linear transformations,
		as can be seen, e.g., from the rectangular decomposition for passive Gaussian operations (see Sec.~\ref{sec:appB}).

		A single-mode squeezer is represented by
		\begin{equation}
			S(r) = \begin{pmatrix}  e^{-r}  & 0 \\  0 & e^{r} \end{pmatrix},
		\end{equation}
		whereas a two-mode squeezer can either be local, as in~$S_1(r_1, r_2)\coloneqq S(r_1) \oplus S(-r_2)$,
		wherein two single-mode squeezers act independently on two modes indexed by~$j\in\{1,2\}$,
		or non-local, as in
		\begin{equation}
			\begin{aligned}
			S_2(r) & \coloneqq  K_\mathrm{bs} \;  S_1(-r, r) \;  K_\mathrm{bs}^\top  \\
			&= K_\mathrm{bs} \begin{pmatrix}  e^{r}  & 0 &0  & 0 \\  0 & e^{-r}  & 0 & 0  \\ 0 &0  &e^{-r} &0 \\  0 &0 & 0 & e^{r}   \end{pmatrix} K_\mathrm{bs}^\top \,.
			\end{aligned}
		\end{equation}
		The locally-squeezed two-mode diagonal Gaussian state~$\pi_\mathrm{d}$ is
		\begin{equation}
			\begin{aligned}
			\pi_\mathrm{d}(r_1, r_2) &= S_1(r_1, r_2) \mathds{1}_4 S_1(r_1, r_2)^\top\\
			&=  \begin{pmatrix}  e^{-2 r_1}  & 0 &0  & 0 \\  0 & e^{2 r_1}  & 0 & 0  \\ 0 &0  &e^{2 r_2} &0 \\  0 &0 & 0 & e^{-2 r_2}   \end{pmatrix}.
			\end{aligned}
		\end{equation}
		For convenience, we sometimes drop the~$r_1, r_2$ dependence and write simply~$\pi_\mathrm{d}$.
		When restricted to two-mode Gaussian states,
		the Bloch-Messiah decomposition~\cite{dutta1995real,braunstein2005squeezing}
		states that
		phase rotations, beam splitters and single- and two-mode squeezers
		are sufficient to implement arbitrary Gaussian transformations.

		Finally, for a special subset of two-mode Gaussian states called
		de-cross-correlated states~\cite{AGLL20,tserkis2020maximum} that
		satisfy~$\langle \hat{x}_i \hat{p}_j + \hat{p}_i
		\hat{x}_j \rangle =0$ for~$i, j \in \{1, 2\}$, the covariance
		matrix~$\sigma$ takes the form
		\begin{equation}
			\begin{aligned}
			\sigma &= \begin{pmatrix}\sigma_{11} &0 &\sigma_{13} &0 \\
				0 &\sigma_{22} &0 &\sigma_{24} \\
				\sigma_{13} &0 &\sigma_{33} &0 \\
				0 &\sigma_{24} &0 &\sigma_{44} \end{pmatrix}  \\
			&= \begin{pmatrix} \sigma_{11} & \sigma_{13} \\
				\sigma_{13}&\sigma_{33} \end{pmatrix} \oplus \begin{pmatrix} \sigma_{22} & \sigma_{24} \\
				\sigma_{24}&\sigma_{44} \end{pmatrix} \eqqcolon C_q \oplus C_p,
			\end{aligned}
			\label{eq:decrosscorrelatedstandardform}
		\end{equation}
		where in the second equality we have re-indexed the
		usual~$[\hat{x}_1, \hat{p}_1, \hat{x}_2, \hat{p}_2]$ operators
		as~$[\hat{x}_1, \hat{x}_2, \hat{p}_1, \hat{p}_2]$.  Note that the form
		in Eq.~\eqref{eq:decrosscorrelatedstandardform} is also referred to as
		a standard form~\cite{DGCZ00} and a large class of two-mode Gaussian
		states can be passively transformed into this form~\cite{AGLL20}.
		Moreover, for pure states in this standard form, $C_p = C_q^{-1}$ (the
		superscript~$^{-1}$ denoting matrix inverse), so
		that~$\pi=C_q \oplus C_q^{-1}$ and consequently,~$\mathcal{E} (\pi)$
		becomes a monotonically increasing function of the sub-correlation
		matrix~$C_q$~\cite{AI05,WGKW+04}.

		\section{Standard decompositions in Gaussian optics}
		\label{sec:appB}
		In this section, we briefly overview some mathematical decompositions of Gaussian states and operations.
		The first, known as Williamson's decomposition,
		provides a way to write arbitrary mixed state~$\sigma$ as a Gaussian (symplectic) transformation~$S$
		performed on a thermal state represented by diagonal matrix~$D$, i.e.,
		\begin{equation}
			\sigma = S D S^\top,
		\end{equation}
		where
		\begin{equation}
				D \coloneqq \mathrm{Diag} \left (\nu_1, \nu_1,
				\dots, \nu_N, \nu_N\right ),
			\end{equation}
			and
			\begin{equation}
				\{\nu_j \} \coloneqq  \left\vert \mathrm{Eigs}\left ( i \Omega \sigma \right ) \right\vert
		\end{equation}
		are the symplectic eigenvalues.
		Here,~$\mathrm{Diag}( x )$ denotes a diagonal matrix with diagonal
		entries given by~$x$ and~$\mathrm{Eigs}(A)$ denotes the eigenvalues of
		a matrix $A$.

		The second decomposition, known as the Bloch-Messiah decomposition
		or Euler decomposition,
		presents a way to decompose arbitrary Gaussian (symplectic) transformation~$S$
		into a sequence of a passive (orthogonal) transformation~$K_1$,
		followed by single-mode squeezing operations on each mode
		(represented by diagonal matrix~$Z=\oplus_{j=1}^N S(r_j)$),
		and a second passive transformation~$K_2$,
		i.e.,
		\begin{equation}
			S = K_1 Z K_2  \, .
		\end{equation}

		One useful application of the Bloch-Messiah decomposition
		is in passively diagonalising a pure Gaussian~$\pi$.
		Being a pure state,~$\pi$ can be written as~$\pi = S S^\top$
		for~$S$ some symplectic matrix.
		The Bloch-Messiah decomposition then yields~$S=K_1 Z K_2$,
		which results in
		\begin{equation}
			\pi= K_1 Z Z^\top K_1^\top = K_1 Z^2 K_1^\top.
		\end{equation}
		Evidently, ~$K_1^\top \pi K_1=Z^2$, which is a diagonal matrix, so
		the passive operation that diagonalises~$\pi$ is~$K_1^\top$.

		Third,
		the rectangular decomposition allows us to
		decompose arbitrary passive (i.e., real, orthogonal, and symplectic) transformations
		into as a sequence of~$N(N-1)/2$ beam splitters and single-mode phase rotations.
		Mathematically,
		this implies that beam-splitters and single-mode phase rotations generate the set of passive Gaussian transformations.

		Lastly,
		the polar decomposition equates any symplectic matrix~$S$
		with the product of an orthogonal, symplectic matrix~$K$
		(representing a passive operation)
		and a symmetric, positive semidefinite, symplectic matrix~$P$,
		i.e.,
		\begin{equation}
			S = K P.
		\end{equation}
		This decomposition is unique and plays a crucial role in Gaussian quantum optics~\cite{idel2016operational}.
		\section{De-cross-correlated pure states saturating the EOF Potential}
		\label{sec:appC}
		In this section,
		we establish necessary and sufficient conditions
		for a de-cross-correlated pure state~$\pi_\mathrm{dcc}$
		to be potential-saturating,
		i.e.,~$\mathcal{E}(\pi_\mathrm{dcc})=\mathcal{P}(\pi_\mathrm{dcc})$.
		In the process,
		we also compute the SOF of this state
		and find that the potential-saturating state also saturates
		the upper bound
		\begin{equation}
			\mathcal{P}(\sigma) \leq h_0\left [\mathcal{S}(\sigma) \right ],
			\label{eq:defBOUNDSupp}
		\end{equation}
		meaning that
		\begin{equation}
			\mathcal{E}(\pi_\mathrm{dcc})=\mathcal{P}(\pi_\mathrm{dcc})= h_0\left [\mathcal{S}(\pi_\mathrm{dcc}) \right ]
		\end{equation}
		when~$\pi_\mathrm{dcc}$ satisfies the necessary and sufficient conditions.

		As stated previously, the covariance matrix
		of a de-cross-correlated pure state~$\pi_\mathrm{dcc}$
		takes the form
		\begin{equation}
			\pi_\mathrm{dcc}  = C_q\oplus {C_q}^{-1} = \begin{pmatrix}q_1 & q_3\\q_3 & q_2\end{pmatrix}\oplus\begin{pmatrix}q_1 & q_3\\q_3 & q_2\end{pmatrix}^{-1},
			\label{eq:dccpurestate}
		\end{equation}
		and the uncertainty relation for~$\pi_\mathrm{dcc}$ simplifies to~$q_1q_2-q_3^2\geq0$ and~$q_1, q_2>0$.
		As shown in the discussion on the Bloch-Messiah decomposition
		in Sec.~\ref{sec:appB},
		any pure state can be passively transformed into a diagonal state,
		and hence a de-cross-correlated state.

		For the de-cross-correlated pure state in Eq.~\eqref{eq:dccpurestate},
		the two largest eigenvalues~$\lambda_3^{\uparrow}$ and~$\lambda_4^{\uparrow}$
		can be computed as
		\begin{equation}
			\begin{aligned}
			\lambda_3^{\uparrow} &= \frac{q_1+q_2+\sqrt{(q_1-q_2)^2+4q_3^2}}{2},  \\
			\text{and} \qquad \lambda_4^{\uparrow} &= \frac{q_1+q_2+\sqrt{(q_1-q_2)^2+4q_3^2}}{2(q_1q_2-q_3^2)},
			\end{aligned}
		\end{equation}
		so that the SOF of the state is
		\begin{equation}
			\begin{aligned}
			\mathcal{S}(\pi_\mathrm{dcc})  &= \frac{1}{2} \ln (\lambda_3^{\uparrow}\lambda_4^{\uparrow})
			= \ln \left ( \frac{q_+ +\sqrt{{q_+}^2-4\Delta}}{2\sqrt{\Delta}} \right )\\
			&=\ln\left (\sqrt{\frac{{q_+}^2}{4q_1q_2}\frac{q_1q_2}{\Delta}}+\sqrt{\frac{{q_+}^2}{4q_1q_2}\frac{q_1q_2}{\Delta}-1}\right ) .
			\end{aligned}
			\label{eq:important}
		\end{equation}
		Here we have defined~$q_+ \coloneqq q_1+q_2$
		and~$\Delta \coloneqq \mathrm{det} (C_q)$ in the second equality
		and rearranged terms in the third.
		Note that~$\mathcal{S}(\pi_\mathrm{dcc})$ is a monotonically increasing function of~$q_+$.
		Using the AM-GM inequality $q_+ \geq 2\sqrt{q_1 q_2}$ and
		the monotonicity of logarithms,
		we get the inequality
		\begin{equation}
			\mathcal{S}(\pi_\mathrm{dcc}) \geq \text{ln} \left (\sqrt{\frac{q_1q_2}{\Delta}}+\sqrt{\frac{q_1q_2}{\Delta}-1} \,\right ) ,
			\label{eq:Sbound}
		\end{equation}
		where the equality can hold if and only if~$q_1=q_2$.
		Next we define~$m(C_q) \coloneqq q_1 q_2/ \Delta$,
		so that Eq.~\eqref{eq:Sbound} becomes
		\begin{equation}
			\mathcal{S}(\pi_\mathrm{dcc}) \geq \ln \left ( \sqrt{m(C_q)}+\sqrt{m(C_q)-1} \right ) \eqqcolon \mathcal{S}_0(\pi_\mathrm{dcc}) .
			\label{eq:Spuredccgreater}
		\end{equation}
		It is now evident that~$\mathcal{S}_0(\pi_\mathrm{dcc})$ is a monotonically increasing function of $m(C_q)$
		and lower-bounds~$\mathcal{S}(\pi_\mathrm{dcc})$, being equal if and only if~$q_1 = q_2$.

		Consider, on the other hand,
		that the EOF~$\mathcal{E}(\pi_\mathrm{dcc})$
		of a pure de-cross-correlated state
		is also a monotonically increasing function of~$m(C_q)$~\cite{AI05,WGKW+04}, given by
		\begin{equation}
			\mathcal{E}{(\pi_\mathrm{dcc})}
			= h\left [\sqrt{m(C_q)}-\sqrt{m(C_q)-1} \right ],
			\label{eq:eofpdccpurestate}
		\end{equation}
		where
		\begin{equation}
				h[x]{\coloneqq}\frac{(1+x)^2}{4x} \ln \left ( \frac{(1+x)^2}{4x}\right ){-}\frac{(1-x)^2}{4x} \ln \left ( \frac{(1-x)^2}{4x} \right ) .
		\end{equation}
		and we also define~$h_0[x]\coloneqq h[e^{-x}]$.
		It then follows from Eqs.~\eqref{eq:Spuredccgreater} and~\eqref{eq:eofpdccpurestate} that
		\begin{equation}
			\begin{aligned}
			\mathcal{E}(\pi_\mathrm{dcc}) &=h \left [ e^{\ln \left (\sqrt{m(C_q)}-\sqrt{m(C_q)-1} \right )} \right ]\\
			&= h \left [ e^{-\ln \left (\sqrt{m(C_q)}+\sqrt{m(C_q)-1} \right )} \right ]\\
			&= h_0 \left [\mathcal{S}_0(\pi_\mathrm{dcc}) \right ] \leq h_0 \left [ \mathcal{S}(\pi_\mathrm{dcc})\right ]
			\end{aligned}
		\end{equation}
		for all pure de-cross-correlated states,
		with equality holding if and only if~$q_1=q_2$.
		The $q_1=q_2$ condition physically corresponds to
		the state from a balanced beam splitter mixing two non-correlated modes,
		which results in a state with two identical modes.
		Combining $\mathcal{E}(\pi_\mathrm{dcc})= h_0 \left [ \mathcal{S}(\pi_\mathrm{dcc})\right ] \iff q_1 = q_2$ with~$\mathcal{E}(\pi_\mathrm{dcc})\leq \mathcal{P}(\pi_\mathrm{dcc}) \leq h_0 \left [ \mathcal{S} (\pi_\mathrm{dcc}) \right ]$,
		we conclude that a pure de-cross-correlated state
		can satisfy~$\mathcal{E}(\pi_\mathrm{dcc}) =  \mathcal{P}(\pi_\mathrm{dcc})$
		if and only if
		$q_1=q_2$.
		We also conclude that this special class of de-cross-correlated states
		satisfy~$\mathcal{P}(\pi_\mathrm{dcc}) = h_0 \left [ \mathcal{S} (\pi_\mathrm{dcc}) \right ]$ and thus saturate the upper bound in Eq.~\eqref{eq:defBOUNDSupp}.

		The above discussion suggests a way to maximise a given state's EOF via passive operations.
		For mixed states, both the EOF and the SOF are defined via a convex optimisation
		over all possible pure state decompositions, so a mixed state $\sigma$ could saturate the upper bound in Eq.~\eqref{eq:defBOUNDSupp} if its potential-saturating pure state $\pi_\mathrm{opt}$ is de-cross-correlated (i.e., of the form in Eq.~\eqref{eq:dccpurestate}) and has~$q_1=q_2$.
		This is the fundamental idea behind our algorithm,
		which first performs passive operations
		required to decouple the squeezing between two modes of a given input state,
		then de-cross-correlates this state passively,
		and finally, by appropriately mixing with an ancillary vacuum,
		removes any excess noise in order to maximise the EOF.

		\bibliography{mybibtex2} 

\begin{thebibliography}{45}%
\makeatletter
\providecommand \@ifxundefined [1]{%
 \@ifx{#1\undefined}
}%
\providecommand \@ifnum [1]{%
 \ifnum #1\expandafter \@firstoftwo
 \else \expandafter \@secondoftwo
 \fi
}%
\providecommand \@ifx [1]{%
 \ifx #1\expandafter \@firstoftwo
 \else \expandafter \@secondoftwo
 \fi
}%
\providecommand \natexlab [1]{#1}%
\providecommand \enquote  [1]{``#1''}%
\providecommand \bibnamefont  [1]{#1}%
\providecommand \bibfnamefont [1]{#1}%
\providecommand \citenamefont [1]{#1}%
\providecommand \href@noop [0]{\@secondoftwo}%
\providecommand \href [0]{\begingroup \@sanitize@url \@href}%
\providecommand \@href[1]{\@@startlink{#1}\@@href}%
\providecommand \@@href[1]{\endgroup#1\@@endlink}%
\providecommand \@sanitize@url [0]{\catcode `\\12\catcode `\$12\catcode
  `\&12\catcode `\#12\catcode `\^12\catcode `\_12\catcode `\%12\relax}%
\providecommand \@@startlink[1]{}%
\providecommand \@@endlink[0]{}%
\providecommand \url  [0]{\begingroup\@sanitize@url \@url }%
\providecommand \@url [1]{\endgroup\@href {#1}{\urlprefix }}%
\providecommand \urlprefix  [0]{URL }%
\providecommand \Eprint [0]{\href }%
\providecommand \doibase [0]{https://doi.org/}%
\providecommand \selectlanguage [0]{\@gobble}%
\providecommand \bibinfo  [0]{\@secondoftwo}%
\providecommand \bibfield  [0]{\@secondoftwo}%
\providecommand \translation [1]{[#1]}%
\providecommand \BibitemOpen [0]{}%
\providecommand \bibitemStop [0]{}%
\providecommand \bibitemNoStop [0]{.\EOS\space}%
\providecommand \EOS [0]{\spacefactor3000\relax}%
\providecommand \BibitemShut  [1]{\csname bibitem#1\endcsname}%
\let\auto@bib@innerbib\@empty
\bibitem [{\citenamefont {Horodecki}\ \emph {et~al.}(2009)\citenamefont
  {Horodecki}, \citenamefont {Horodecki}, \citenamefont {Horodecki},\ and\
  \citenamefont {Horodecki}}]{horodecki2009quantum}%
  \BibitemOpen
  \bibfield  {author} {\bibinfo {author} {\bibfnamefont {R.}~\bibnamefont
  {Horodecki}}, \bibinfo {author} {\bibfnamefont {P.}~\bibnamefont
  {Horodecki}}, \bibinfo {author} {\bibfnamefont {M.}~\bibnamefont
  {Horodecki}},\ and\ \bibinfo {author} {\bibfnamefont {K.}~\bibnamefont
  {Horodecki}},\ }\bibfield  {title} {\bibinfo {title} {Quantum entanglement},\
  }\href {https://doi.org/10.1103/RevModPhys.81.865} {\bibfield  {journal}
  {\bibinfo  {journal} {Rev. Mod. Phys.}\ }\textbf {\bibinfo {volume} {81}},\
  \bibinfo {pages} {865} (\bibinfo {year} {2009})}\BibitemShut {NoStop}%
\bibitem [{\citenamefont {Weedbrook}\ \emph {et~al.}(2012)\citenamefont
  {Weedbrook}, \citenamefont {Pirandola}, \citenamefont {Garc\'{\i}a-Patr\'on},
  \citenamefont {Cerf}, \citenamefont {Ralph}, \citenamefont {Shapiro},\ and\
  \citenamefont {Lloyd}}]{WPGC+2012}%
  \BibitemOpen
  \bibfield  {author} {\bibinfo {author} {\bibfnamefont {C.}~\bibnamefont
  {Weedbrook}}, \bibinfo {author} {\bibfnamefont {S.}~\bibnamefont
  {Pirandola}}, \bibinfo {author} {\bibfnamefont {R.}~\bibnamefont
  {Garc\'{\i}a-Patr\'on}}, \bibinfo {author} {\bibfnamefont {N.~J.}\
  \bibnamefont {Cerf}}, \bibinfo {author} {\bibfnamefont {T.~C.}\ \bibnamefont
  {Ralph}}, \bibinfo {author} {\bibfnamefont {J.~H.}\ \bibnamefont {Shapiro}},\
  and\ \bibinfo {author} {\bibfnamefont {S.}~\bibnamefont {Lloyd}},\ }\bibfield
   {title} {\bibinfo {title} {{G}aussian quantum information},\ }\href
  {https://doi.org/10.1103/RevModPhys.84.621} {\bibfield  {journal} {\bibinfo
  {journal} {Rev. Mod. Phys.}\ }\textbf {\bibinfo {volume} {84}},\ \bibinfo
  {pages} {621} (\bibinfo {year} {2012})}\BibitemShut {NoStop}%
\bibitem [{\citenamefont {Walls}(1983)}]{walls1983squeezed}%
  \BibitemOpen
  \bibfield  {author} {\bibinfo {author} {\bibfnamefont {D.~F.}\ \bibnamefont
  {Walls}},\ }\bibfield  {title} {\bibinfo {title} {Squeezed states of light},\
  }\href {https://doi.org/10.1038/306141a0} {\bibfield  {journal} {\bibinfo
  {journal} {Nature}\ }\textbf {\bibinfo {volume} {306}},\ \bibinfo {pages}
  {141} (\bibinfo {year} {1983})}\BibitemShut {NoStop}%
\bibitem [{\citenamefont {Scheel}\ and\ \citenamefont
  {Welsch}(2001)}]{PhysRevA.64.063811}%
  \BibitemOpen
  \bibfield  {author} {\bibinfo {author} {\bibfnamefont {S.}~\bibnamefont
  {Scheel}}\ and\ \bibinfo {author} {\bibfnamefont {D.-G.}\ \bibnamefont
  {Welsch}},\ }\bibfield  {title} {\bibinfo {title} {Entanglement generation
  and degradation by passive optical devices},\ }\href
  {https://doi.org/10.1103/PhysRevA.64.063811} {\bibfield  {journal} {\bibinfo
  {journal} {Phys. Rev. A}\ }\textbf {\bibinfo {volume} {64}},\ \bibinfo
  {pages} {063811} (\bibinfo {year} {2001})}\BibitemShut {NoStop}%
\bibitem [{\citenamefont {Kim}\ \emph {et~al.}(2002)\citenamefont {Kim},
  \citenamefont {Son}, \citenamefont {Bu\ifmmode~\check{z}\else \v{z}\fi{}ek},\
  and\ \citenamefont {Knight}}]{PhysRevA.65.032323}%
  \BibitemOpen
  \bibfield  {author} {\bibinfo {author} {\bibfnamefont {M.~S.}\ \bibnamefont
  {Kim}}, \bibinfo {author} {\bibfnamefont {W.}~\bibnamefont {Son}}, \bibinfo
  {author} {\bibfnamefont {V.}~\bibnamefont {Bu\ifmmode~\check{z}\else
  \v{z}\fi{}ek}},\ and\ \bibinfo {author} {\bibfnamefont {P.~L.}\ \bibnamefont
  {Knight}},\ }\bibfield  {title} {\bibinfo {title} {Entanglement by a beam
  splitter: Nonclassicality as a prerequisite for entanglement},\ }\href
  {https://doi.org/10.1103/PhysRevA.65.032323} {\bibfield  {journal} {\bibinfo
  {journal} {Phys. Rev. A}\ }\textbf {\bibinfo {volume} {65}},\ \bibinfo
  {pages} {032323} (\bibinfo {year} {2002})}\BibitemShut {NoStop}%
\bibitem [{\citenamefont {Xiang-bin}(2002)}]{PhysRevA.66.024303}%
  \BibitemOpen
  \bibfield  {author} {\bibinfo {author} {\bibfnamefont {W.}~\bibnamefont
  {Xiang-bin}},\ }\bibfield  {title} {\bibinfo {title} {Theorem for the
  beam-splitter entangler},\ }\href
  {https://doi.org/10.1103/PhysRevA.66.024303} {\bibfield  {journal} {\bibinfo
  {journal} {Phys. Rev. A}\ }\textbf {\bibinfo {volume} {66}},\ \bibinfo
  {pages} {024303} (\bibinfo {year} {2002})}\BibitemShut {NoStop}%
\bibitem [{\citenamefont {Schnabel}\ \emph {et~al.}(2010)\citenamefont
  {Schnabel}, \citenamefont {Mavalvala}, \citenamefont {McClelland},\ and\
  \citenamefont {Lam}}]{SqueezeMetro1}%
  \BibitemOpen
  \bibfield  {author} {\bibinfo {author} {\bibfnamefont {R.}~\bibnamefont
  {Schnabel}}, \bibinfo {author} {\bibfnamefont {N.}~\bibnamefont {Mavalvala}},
  \bibinfo {author} {\bibfnamefont {D.~E.}\ \bibnamefont {McClelland}},\ and\
  \bibinfo {author} {\bibfnamefont {P.~K.}\ \bibnamefont {Lam}},\ }\bibfield
  {title} {\bibinfo {title} {Quantum metrology for gravitational wave
  astronomy},\ }\bibfield  {journal} {\bibinfo  {journal} {Nature
  Communications}\ }\textbf {\bibinfo {volume} {1}},\ \href
  {https://doi.org/10.1038/ncomms1122} {10.1038/ncomms1122} (\bibinfo {year}
  {2010})\BibitemShut {NoStop}%
\bibitem [{\citenamefont {Abadie}\ \emph {et~al.}(2011)\citenamefont {Abadie},
  \citenamefont {Collaboration} \emph {et~al.}}]{SqueezeMetro2}%
  \BibitemOpen
  \bibfield  {author} {\bibinfo {author} {\bibfnamefont {J.}~\bibnamefont
  {Abadie}}, \bibinfo {author} {\bibfnamefont {T.~L.~S.}\ \bibnamefont
  {Collaboration}}, \emph {et~al.},\ }\bibfield  {title} {\bibinfo {title} {A
  gravitational wave observatory operating beyond the quantum shot-noise
  limit},\ }\href {https://doi.org/10.1038/nphys2083} {\bibfield  {journal}
  {\bibinfo  {journal} {Nature Physics}\ }\textbf {\bibinfo {volume} {7}},\
  \bibinfo {pages} {962} (\bibinfo {year} {2011})}\BibitemShut {NoStop}%
\bibitem [{\citenamefont {Aasi}\ \emph {et~al.}(2013)\citenamefont {Aasi},
  \citenamefont {Abadie} \emph {et~al.}}]{SqueezeMetro3}%
  \BibitemOpen
  \bibfield  {author} {\bibinfo {author} {\bibfnamefont {J.}~\bibnamefont
  {Aasi}}, \bibinfo {author} {\bibfnamefont {J.}~\bibnamefont {Abadie}}, \emph
  {et~al.},\ }\bibfield  {title} {\bibinfo {title} {Enhanced sensitivity of the
  {LIGO} gravitational wave detector by using squeezed states of light},\
  }\href {https://doi.org/10.1038/nphoton.2013.177} {\bibfield  {journal}
  {\bibinfo  {journal} {Nature Photonics}\ }\textbf {\bibinfo {volume} {7}},\
  \bibinfo {pages} {613} (\bibinfo {year} {2013})}\BibitemShut {NoStop}%
\bibitem [{\citenamefont {Cerf}\ \emph {et~al.}(2001)\citenamefont {Cerf},
  \citenamefont {L\'evy},\ and\ \citenamefont {Assche}}]{SqueezeQComm}%
  \BibitemOpen
  \bibfield  {author} {\bibinfo {author} {\bibfnamefont {N.~J.}\ \bibnamefont
  {Cerf}}, \bibinfo {author} {\bibfnamefont {M.}~\bibnamefont {L\'evy}},\ and\
  \bibinfo {author} {\bibfnamefont {G.~V.}\ \bibnamefont {Assche}},\ }\bibfield
   {title} {\bibinfo {title} {Quantum distribution of {G}aussian keys using
  squeezed states},\ }\href {https://doi.org/10.1103/PhysRevA.63.052311}
  {\bibfield  {journal} {\bibinfo  {journal} {Phys. Rev. A}\ }\textbf {\bibinfo
  {volume} {63}},\ \bibinfo {pages} {052311} (\bibinfo {year}
  {2001})}\BibitemShut {NoStop}%
\bibitem [{\citenamefont {Ralph}(1999)}]{SqueezeQComm2}%
  \BibitemOpen
  \bibfield  {author} {\bibinfo {author} {\bibfnamefont {T.~C.}\ \bibnamefont
  {Ralph}},\ }\bibfield  {title} {\bibinfo {title} {Continuous variable quantum
  cryptography},\ }\href {https://doi.org/10.1103/PhysRevA.61.010303}
  {\bibfield  {journal} {\bibinfo  {journal} {Phys. Rev. A}\ }\textbf {\bibinfo
  {volume} {61}},\ \bibinfo {pages} {010303} (\bibinfo {year}
  {1999})}\BibitemShut {NoStop}%
\bibitem [{\citenamefont {Hillery}(2000)}]{SqueezeQComm3}%
  \BibitemOpen
  \bibfield  {author} {\bibinfo {author} {\bibfnamefont {M.}~\bibnamefont
  {Hillery}},\ }\bibfield  {title} {\bibinfo {title} {Quantum cryptography with
  squeezed states},\ }\href {https://doi.org/10.1103/PhysRevA.61.022309}
  {\bibfield  {journal} {\bibinfo  {journal} {Phys. Rev. A}\ }\textbf {\bibinfo
  {volume} {61}},\ \bibinfo {pages} {022309} (\bibinfo {year}
  {2000})}\BibitemShut {NoStop}%
\bibitem [{\citenamefont {Reid}(2000)}]{SqueezeQComm4}%
  \BibitemOpen
  \bibfield  {author} {\bibinfo {author} {\bibfnamefont {M.~D.}\ \bibnamefont
  {Reid}},\ }\bibfield  {title} {\bibinfo {title} {Quantum cryptography with a
  predetermined key, using continuous-variable einstein-podolsky-rosen
  correlations},\ }\href {https://doi.org/10.1103/PhysRevA.62.062308}
  {\bibfield  {journal} {\bibinfo  {journal} {Phys. Rev. A}\ }\textbf {\bibinfo
  {volume} {62}},\ \bibinfo {pages} {062308} (\bibinfo {year}
  {2000})}\BibitemShut {NoStop}%
\bibitem [{\citenamefont {Braunstein}\ and\ \citenamefont
  {Kimble}(1998)}]{SqueezeTeleport1}%
  \BibitemOpen
  \bibfield  {author} {\bibinfo {author} {\bibfnamefont {S.~L.}\ \bibnamefont
  {Braunstein}}\ and\ \bibinfo {author} {\bibfnamefont {H.~J.}\ \bibnamefont
  {Kimble}},\ }\bibfield  {title} {\bibinfo {title} {Teleportation of
  continuous quantum variables},\ }\href
  {https://doi.org/10.1103/PhysRevLett.80.869} {\bibfield  {journal} {\bibinfo
  {journal} {Phys. Rev. Lett.}\ }\textbf {\bibinfo {volume} {80}},\ \bibinfo
  {pages} {869} (\bibinfo {year} {1998})}\BibitemShut {NoStop}%
\bibitem [{\citenamefont {Takei}\ \emph {et~al.}(2005)\citenamefont {Takei},
  \citenamefont {Aoki}, \citenamefont {Koike}, \citenamefont {Yoshino},
  \citenamefont {Wakui}, \citenamefont {Yonezawa}, \citenamefont {Hiraoka},
  \citenamefont {Mizuno}, \citenamefont {Takeoka}, \citenamefont {Ban},\ and\
  \citenamefont {Furusawa}}]{SqueezeTeleport2}%
  \BibitemOpen
  \bibfield  {author} {\bibinfo {author} {\bibfnamefont {N.}~\bibnamefont
  {Takei}}, \bibinfo {author} {\bibfnamefont {T.}~\bibnamefont {Aoki}},
  \bibinfo {author} {\bibfnamefont {S.}~\bibnamefont {Koike}}, \bibinfo
  {author} {\bibfnamefont {K.-i.}\ \bibnamefont {Yoshino}}, \bibinfo {author}
  {\bibfnamefont {K.}~\bibnamefont {Wakui}}, \bibinfo {author} {\bibfnamefont
  {H.}~\bibnamefont {Yonezawa}}, \bibinfo {author} {\bibfnamefont
  {T.}~\bibnamefont {Hiraoka}}, \bibinfo {author} {\bibfnamefont
  {J.}~\bibnamefont {Mizuno}}, \bibinfo {author} {\bibfnamefont
  {M.}~\bibnamefont {Takeoka}}, \bibinfo {author} {\bibfnamefont
  {M.}~\bibnamefont {Ban}},\ and\ \bibinfo {author} {\bibfnamefont
  {A.}~\bibnamefont {Furusawa}},\ }\bibfield  {title} {\bibinfo {title}
  {Experimental demonstration of quantum teleportation of a squeezed state},\
  }\href {https://doi.org/10.1103/PhysRevA.72.042304} {\bibfield  {journal}
  {\bibinfo  {journal} {Phys. Rev. A}\ }\textbf {\bibinfo {volume} {72}},\
  \bibinfo {pages} {042304} (\bibinfo {year} {2005})}\BibitemShut {NoStop}%
\bibitem [{\citenamefont {Milburn}\ and\ \citenamefont
  {Braunstein}(1999)}]{SqueezeTeleport3}%
  \BibitemOpen
  \bibfield  {author} {\bibinfo {author} {\bibfnamefont {G.~J.}\ \bibnamefont
  {Milburn}}\ and\ \bibinfo {author} {\bibfnamefont {S.~L.}\ \bibnamefont
  {Braunstein}},\ }\bibfield  {title} {\bibinfo {title} {Quantum teleportation
  with squeezed vacuum states},\ }\href
  {https://doi.org/10.1103/PhysRevA.60.937} {\bibfield  {journal} {\bibinfo
  {journal} {Phys. Rev. A}\ }\textbf {\bibinfo {volume} {60}},\ \bibinfo
  {pages} {937} (\bibinfo {year} {1999})}\BibitemShut {NoStop}%
\bibitem [{\citenamefont {Su}\ \emph {et~al.}(2007)\citenamefont {Su},
  \citenamefont {Tan}, \citenamefont {Jia}, \citenamefont {Zhang},
  \citenamefont {Xie},\ and\ \citenamefont {Peng}}]{ExtCite1}%
  \BibitemOpen
  \bibfield  {author} {\bibinfo {author} {\bibfnamefont {X.}~\bibnamefont
  {Su}}, \bibinfo {author} {\bibfnamefont {A.}~\bibnamefont {Tan}}, \bibinfo
  {author} {\bibfnamefont {X.}~\bibnamefont {Jia}}, \bibinfo {author}
  {\bibfnamefont {J.}~\bibnamefont {Zhang}}, \bibinfo {author} {\bibfnamefont
  {C.}~\bibnamefont {Xie}},\ and\ \bibinfo {author} {\bibfnamefont
  {K.}~\bibnamefont {Peng}},\ }\bibfield  {title} {\bibinfo {title}
  {Experimental preparation of quadripartite cluster and
  greenberger-horne-zeilinger entangled states for continuous variables},\
  }\href {https://doi.org/10.1103/PhysRevLett.98.070502} {\bibfield  {journal}
  {\bibinfo  {journal} {Phys. Rev. Lett.}\ }\textbf {\bibinfo {volume} {98}},\
  \bibinfo {pages} {070502} (\bibinfo {year} {2007})}\BibitemShut {NoStop}%
\bibitem [{\citenamefont {Su}\ \emph {et~al.}(2012)\citenamefont {Su},
  \citenamefont {Zhao}, \citenamefont {Hao}, \citenamefont {Jia}, \citenamefont
  {Xie},\ and\ \citenamefont {Peng}}]{ExtCite2}%
  \BibitemOpen
  \bibfield  {author} {\bibinfo {author} {\bibfnamefont {X.}~\bibnamefont
  {Su}}, \bibinfo {author} {\bibfnamefont {Y.}~\bibnamefont {Zhao}}, \bibinfo
  {author} {\bibfnamefont {S.}~\bibnamefont {Hao}}, \bibinfo {author}
  {\bibfnamefont {X.}~\bibnamefont {Jia}}, \bibinfo {author} {\bibfnamefont
  {C.}~\bibnamefont {Xie}},\ and\ \bibinfo {author} {\bibfnamefont
  {K.}~\bibnamefont {Peng}},\ }\bibfield  {title} {\bibinfo {title}
  {Experimental preparation of eight-partite cluster state for photonic
  qumodes},\ }\href {https://doi.org/10.1364/OL.37.005178} {\bibfield
  {journal} {\bibinfo  {journal} {Opt. Lett.}\ }\textbf {\bibinfo {volume}
  {37}},\ \bibinfo {pages} {5178} (\bibinfo {year} {2012})}\BibitemShut
  {NoStop}%
\bibitem [{\citenamefont {Zhao}\ \emph {et~al.}(2020)\citenamefont {Zhao},
  \citenamefont {Liu}, \citenamefont {Jeng}, \citenamefont {Gu}, \citenamefont
  {Thompson}, \citenamefont {Lam},\ and\ \citenamefont
  {Assad}}]{SqueezeHeralded}%
  \BibitemOpen
  \bibfield  {author} {\bibinfo {author} {\bibfnamefont {J.}~\bibnamefont
  {Zhao}}, \bibinfo {author} {\bibfnamefont {K.}~\bibnamefont {Liu}}, \bibinfo
  {author} {\bibfnamefont {H.}~\bibnamefont {Jeng}}, \bibinfo {author}
  {\bibfnamefont {M.}~\bibnamefont {Gu}}, \bibinfo {author} {\bibfnamefont
  {J.}~\bibnamefont {Thompson}}, \bibinfo {author} {\bibfnamefont {P.~K.}\
  \bibnamefont {Lam}},\ and\ \bibinfo {author} {\bibfnamefont {S.~M.}\
  \bibnamefont {Assad}},\ }\bibfield  {title} {\bibinfo {title} {A
  high-fidelity heralded quantum squeezing gate},\ }\href
  {https://doi.org/10.1038/s41566-020-0592-2} {\bibfield  {journal} {\bibinfo
  {journal} {Nature Photonics}\ }\textbf {\bibinfo {volume} {14}},\ \bibinfo
  {pages} {306} (\bibinfo {year} {2020})}\BibitemShut {NoStop}%
\bibitem [{\citenamefont {Larsen}\ \emph {et~al.}(2019)\citenamefont {Larsen},
  \citenamefont {Guo}, \citenamefont {Breum}, \citenamefont
  {Neergaard-Nielsen},\ and\ \citenamefont {Andersen}}]{ExtCite3}%
  \BibitemOpen
  \bibfield  {author} {\bibinfo {author} {\bibfnamefont {M.~V.}\ \bibnamefont
  {Larsen}}, \bibinfo {author} {\bibfnamefont {X.}~\bibnamefont {Guo}},
  \bibinfo {author} {\bibfnamefont {C.~R.}\ \bibnamefont {Breum}}, \bibinfo
  {author} {\bibfnamefont {J.~S.}\ \bibnamefont {Neergaard-Nielsen}},\ and\
  \bibinfo {author} {\bibfnamefont {U.~L.}\ \bibnamefont {Andersen}},\
  }\bibfield  {title} {\bibinfo {title} {Deterministic generation of a
  two-dimensional cluster state},\ }\href
  {https://doi.org/10.1126/science.aay4354} {\bibfield  {journal} {\bibinfo
  {journal} {Science}\ }\textbf {\bibinfo {volume} {366}},\ \bibinfo {pages}
  {369} (\bibinfo {year} {2019})}\BibitemShut {NoStop}%
\bibitem [{\citenamefont {Asavanant}\ \emph {et~al.}(2019)\citenamefont
  {Asavanant}, \citenamefont {Shiozawa}, \citenamefont {Yokoyama},
  \citenamefont {Charoensombutamon}, \citenamefont {Emura}, \citenamefont
  {Alexander}, \citenamefont {Takeda}, \citenamefont {ichi Yoshikawa},
  \citenamefont {Menicucci}, \citenamefont {Yonezawa},\ and\ \citenamefont
  {Furusawa}}]{ExtCite4}%
  \BibitemOpen
  \bibfield  {author} {\bibinfo {author} {\bibfnamefont {W.}~\bibnamefont
  {Asavanant}}, \bibinfo {author} {\bibfnamefont {Y.}~\bibnamefont {Shiozawa}},
  \bibinfo {author} {\bibfnamefont {S.}~\bibnamefont {Yokoyama}}, \bibinfo
  {author} {\bibfnamefont {B.}~\bibnamefont {Charoensombutamon}}, \bibinfo
  {author} {\bibfnamefont {H.}~\bibnamefont {Emura}}, \bibinfo {author}
  {\bibfnamefont {R.~N.}\ \bibnamefont {Alexander}}, \bibinfo {author}
  {\bibfnamefont {S.}~\bibnamefont {Takeda}}, \bibinfo {author} {\bibfnamefont
  {J.}~\bibnamefont {ichi Yoshikawa}}, \bibinfo {author} {\bibfnamefont
  {N.~C.}\ \bibnamefont {Menicucci}}, \bibinfo {author} {\bibfnamefont
  {H.}~\bibnamefont {Yonezawa}},\ and\ \bibinfo {author} {\bibfnamefont
  {A.}~\bibnamefont {Furusawa}},\ }\bibfield  {title} {\bibinfo {title}
  {Generation of time-domain-multiplexed two-dimensional cluster state},\
  }\href {https://doi.org/10.1126/science.aay2645} {\bibfield  {journal}
  {\bibinfo  {journal} {Science}\ }\textbf {\bibinfo {volume} {366}},\ \bibinfo
  {pages} {373} (\bibinfo {year} {2019})}\BibitemShut {NoStop}%
\bibitem [{\citenamefont {Wolf}\ \emph {et~al.}(2003)\citenamefont {Wolf},
  \citenamefont {Eisert},\ and\ \citenamefont {Plenio}}]{wolf2003entangling}%
  \BibitemOpen
  \bibfield  {author} {\bibinfo {author} {\bibfnamefont {M.~M.}\ \bibnamefont
  {Wolf}}, \bibinfo {author} {\bibfnamefont {J.}~\bibnamefont {Eisert}},\ and\
  \bibinfo {author} {\bibfnamefont {M.~B.}\ \bibnamefont {Plenio}},\ }\bibfield
   {title} {\bibinfo {title} {Entangling power of passive optical elements},\
  }\href {https://doi.org/10.1103/PhysRevLett.90.047904} {\bibfield  {journal}
  {\bibinfo  {journal} {Phys. Rev. Lett.}\ }\textbf {\bibinfo {volume} {90}},\
  \bibinfo {pages} {047904} (\bibinfo {year} {2003})}\BibitemShut {NoStop}%
\bibitem [{\citenamefont {Lami}\ \emph {et~al.}(2018)\citenamefont {Lami},
  \citenamefont {Serafini},\ and\ \citenamefont {Adesso}}]{Lami2018}%
  \BibitemOpen
  \bibfield  {author} {\bibinfo {author} {\bibfnamefont {L.}~\bibnamefont
  {Lami}}, \bibinfo {author} {\bibfnamefont {A.}~\bibnamefont {Serafini}},\
  and\ \bibinfo {author} {\bibfnamefont {G.}~\bibnamefont {Adesso}},\
  }\bibfield  {title} {\bibinfo {title} {Gaussian entanglement revisited},\
  }\href {https://doi.org/10.1088/1367-2630/aaa654} {\bibfield  {journal}
  {\bibinfo  {journal} {New Journal of Physics}\ }\textbf {\bibinfo {volume}
  {20}},\ \bibinfo {pages} {023030} (\bibinfo {year} {2018})}\BibitemShut
  {NoStop}%
\bibitem [{\citenamefont {Asb\'oth}\ \emph {et~al.}(2005)\citenamefont
  {Asb\'oth}, \citenamefont {Calsamiglia},\ and\ \citenamefont
  {Ritsch}}]{PhysRevLett.94.173602}%
  \BibitemOpen
  \bibfield  {author} {\bibinfo {author} {\bibfnamefont {J.~K.}\ \bibnamefont
  {Asb\'oth}}, \bibinfo {author} {\bibfnamefont {J.}~\bibnamefont
  {Calsamiglia}},\ and\ \bibinfo {author} {\bibfnamefont {H.}~\bibnamefont
  {Ritsch}},\ }\bibfield  {title} {\bibinfo {title} {Computable measure of
  nonclassicality for light},\ }\href
  {https://doi.org/10.1103/PhysRevLett.94.173602} {\bibfield  {journal}
  {\bibinfo  {journal} {Phys. Rev. Lett.}\ }\textbf {\bibinfo {volume} {94}},\
  \bibinfo {pages} {173602} (\bibinfo {year} {2005})}\BibitemShut {NoStop}%
\bibitem [{\citenamefont {Tserkis}\ \emph {et~al.}(2020)\citenamefont
  {Tserkis}, \citenamefont {Thompson}, \citenamefont {Lund}, \citenamefont
  {Ralph}, \citenamefont {Lam}, \citenamefont {Gu},\ and\ \citenamefont
  {Assad}}]{tserkis2020maximum}%
  \BibitemOpen
  \bibfield  {author} {\bibinfo {author} {\bibfnamefont {S.}~\bibnamefont
  {Tserkis}}, \bibinfo {author} {\bibfnamefont {J.}~\bibnamefont {Thompson}},
  \bibinfo {author} {\bibfnamefont {A.~P.}\ \bibnamefont {Lund}}, \bibinfo
  {author} {\bibfnamefont {T.~C.}\ \bibnamefont {Ralph}}, \bibinfo {author}
  {\bibfnamefont {P.~K.}\ \bibnamefont {Lam}}, \bibinfo {author} {\bibfnamefont
  {M.}~\bibnamefont {Gu}},\ and\ \bibinfo {author} {\bibfnamefont {S.~M.}\
  \bibnamefont {Assad}},\ }\bibfield  {title} {\bibinfo {title} {Maximum
  entanglement of formation for a two-mode {G}aussian state over passive
  operations},\ }\href {https://doi.org/10.1103/PhysRevA.102.052418} {\bibfield
   {journal} {\bibinfo  {journal} {Phys. Rev. A}\ }\textbf {\bibinfo {volume}
  {102}},\ \bibinfo {pages} {052418} (\bibinfo {year} {2020})}\BibitemShut
  {NoStop}%
\bibitem [{\citenamefont {Vidal}\ and\ \citenamefont
  {Werner}(2002)}]{vidal2002computable}%
  \BibitemOpen
  \bibfield  {author} {\bibinfo {author} {\bibfnamefont {G.}~\bibnamefont
  {Vidal}}\ and\ \bibinfo {author} {\bibfnamefont {R.~F.}\ \bibnamefont
  {Werner}},\ }\bibfield  {title} {\bibinfo {title} {Computable measure of
  entanglement},\ }\href {https://doi.org/10.1103/PhysRevA.65.032314}
  {\bibfield  {journal} {\bibinfo  {journal} {Phys. Rev. A}\ }\textbf {\bibinfo
  {volume} {65}},\ \bibinfo {pages} {032314} (\bibinfo {year}
  {2002})}\BibitemShut {NoStop}%
\bibitem [{\citenamefont {Bennett}\ \emph {et~al.}(1996)\citenamefont
  {Bennett}, \citenamefont {DiVincenzo}, \citenamefont {Smolin},\ and\
  \citenamefont {Wootters}}]{bennett1996mixed}%
  \BibitemOpen
  \bibfield  {author} {\bibinfo {author} {\bibfnamefont {C.~H.}\ \bibnamefont
  {Bennett}}, \bibinfo {author} {\bibfnamefont {D.~P.}\ \bibnamefont
  {DiVincenzo}}, \bibinfo {author} {\bibfnamefont {J.~A.}\ \bibnamefont
  {Smolin}},\ and\ \bibinfo {author} {\bibfnamefont {W.~K.}\ \bibnamefont
  {Wootters}},\ }\bibfield  {title} {\bibinfo {title} {Mixed-state entanglement
  and quantum error correction},\ }\href
  {https://doi.org/10.1103/PhysRevA.54.3824} {\bibfield  {journal} {\bibinfo
  {journal} {Phys. Rev. A}\ }\textbf {\bibinfo {volume} {54}},\ \bibinfo
  {pages} {3824} (\bibinfo {year} {1996})}\BibitemShut {NoStop}%
\bibitem [{\citenamefont {Idel}\ \emph {et~al.}(2016)\citenamefont {Idel},
  \citenamefont {Lercher},\ and\ \citenamefont {Wolf}}]{idel2016operational}%
  \BibitemOpen
  \bibfield  {author} {\bibinfo {author} {\bibfnamefont {M.}~\bibnamefont
  {Idel}}, \bibinfo {author} {\bibfnamefont {D.}~\bibnamefont {Lercher}},\ and\
  \bibinfo {author} {\bibfnamefont {M.~M.}\ \bibnamefont {Wolf}},\ }\bibfield
  {title} {\bibinfo {title} {An operational measure for squeezing},\ }\href
  {https://doi.org/10.1088/1751-8113/49/44/445304} {\bibfield  {journal}
  {\bibinfo  {journal} {Journal of Physics A: Mathematical and Theoretical}\
  }\textbf {\bibinfo {volume} {49}},\ \bibinfo {pages} {445304} (\bibinfo
  {year} {2016})}\BibitemShut {NoStop}%
\bibitem [{\citenamefont {Simon}\ \emph {et~al.}(1994)\citenamefont {Simon},
  \citenamefont {Mukunda},\ and\ \citenamefont {Dutta}}]{simon1994quantum}%
  \BibitemOpen
  \bibfield  {author} {\bibinfo {author} {\bibfnamefont {R.}~\bibnamefont
  {Simon}}, \bibinfo {author} {\bibfnamefont {N.}~\bibnamefont {Mukunda}},\
  and\ \bibinfo {author} {\bibfnamefont {B.}~\bibnamefont {Dutta}},\ }\bibfield
   {title} {\bibinfo {title} {Quantum-noise matrix for multimode systems: U(n)
  invariance, squeezing, and normal forms},\ }\href
  {https://doi.org/10.1103/PhysRevA.49.1567} {\bibfield  {journal} {\bibinfo
  {journal} {Phys. Rev. A}\ }\textbf {\bibinfo {volume} {49}},\ \bibinfo
  {pages} {1567} (\bibinfo {year} {1994})}\BibitemShut {NoStop}%
\bibitem [{\citenamefont {Wu}\ \emph {et~al.}(1986)\citenamefont {Wu},
  \citenamefont {Kimble}, \citenamefont {Hall},\ and\ \citenamefont
  {Wu}}]{wu1986generation}%
  \BibitemOpen
  \bibfield  {author} {\bibinfo {author} {\bibfnamefont {L.-A.}\ \bibnamefont
  {Wu}}, \bibinfo {author} {\bibfnamefont {H.~J.}\ \bibnamefont {Kimble}},
  \bibinfo {author} {\bibfnamefont {J.~L.}\ \bibnamefont {Hall}},\ and\
  \bibinfo {author} {\bibfnamefont {H.}~\bibnamefont {Wu}},\ }\bibfield
  {title} {\bibinfo {title} {Generation of squeezed states by parametric down
  conversion},\ }\href {https://doi.org/10.1103/PhysRevLett.57.2520} {\bibfield
   {journal} {\bibinfo  {journal} {Phys. Rev. Lett.}\ }\textbf {\bibinfo
  {volume} {57}},\ \bibinfo {pages} {2520} (\bibinfo {year}
  {1986})}\BibitemShut {NoStop}%
\bibitem [{\citenamefont {Giedke}\ \emph {et~al.}(2001)\citenamefont {Giedke},
  \citenamefont {Kraus}, \citenamefont {Lewenstein},\ and\ \citenamefont
  {Cirac}}]{GKLC01}%
  \BibitemOpen
  \bibfield  {author} {\bibinfo {author} {\bibfnamefont {G.}~\bibnamefont
  {Giedke}}, \bibinfo {author} {\bibfnamefont {B.}~\bibnamefont {Kraus}},
  \bibinfo {author} {\bibfnamefont {M.}~\bibnamefont {Lewenstein}},\ and\
  \bibinfo {author} {\bibfnamefont {J.~I.}\ \bibnamefont {Cirac}},\ }\bibfield
  {title} {\bibinfo {title} {Entanglement criteria for all bipartite {G}aussian
  states},\ }\href {https://doi.org/10.1103/PhysRevLett.87.167904} {\bibfield
  {journal} {\bibinfo  {journal} {Phys. Rev. Lett.}\ }\textbf {\bibinfo
  {volume} {87}},\ \bibinfo {pages} {167904} (\bibinfo {year}
  {2001})}\BibitemShut {NoStop}%
\bibitem [{\citenamefont {Vedral}\ \emph {et~al.}(1997)\citenamefont {Vedral},
  \citenamefont {Plenio}, \citenamefont {Rippin},\ and\ \citenamefont
  {Knight}}]{VPRK97}%
  \BibitemOpen
  \bibfield  {author} {\bibinfo {author} {\bibfnamefont {V.}~\bibnamefont
  {Vedral}}, \bibinfo {author} {\bibfnamefont {M.~B.}\ \bibnamefont {Plenio}},
  \bibinfo {author} {\bibfnamefont {M.~A.}\ \bibnamefont {Rippin}},\ and\
  \bibinfo {author} {\bibfnamefont {P.~L.}\ \bibnamefont {Knight}},\ }\bibfield
   {title} {\bibinfo {title} {Quantifying entanglement},\ }\href
  {https://doi.org/10.1103/PhysRevLett.78.2275} {\bibfield  {journal} {\bibinfo
   {journal} {Phys. Rev. Lett.}\ }\textbf {\bibinfo {volume} {78}},\ \bibinfo
  {pages} {2275} (\bibinfo {year} {1997})}\BibitemShut {NoStop}%
\bibitem [{\citenamefont {Adesso}\ and\ \citenamefont
  {Illuminati}(2005)}]{AI05}%
  \BibitemOpen
  \bibfield  {author} {\bibinfo {author} {\bibfnamefont {G.}~\bibnamefont
  {Adesso}}\ and\ \bibinfo {author} {\bibfnamefont {F.}~\bibnamefont
  {Illuminati}},\ }\bibfield  {title} {\bibinfo {title} {{G}aussian measures of
  entanglement versus negativities: Ordering of two-mode {G}aussian states},\
  }\href {https://doi.org/10.1103/PhysRevA.72.032334} {\bibfield  {journal}
  {\bibinfo  {journal} {Phys. Rev. A}\ }\textbf {\bibinfo {volume} {72}},\
  \bibinfo {pages} {032334} (\bibinfo {year} {2005})}\BibitemShut {NoStop}%
\bibitem [{\citenamefont {Virmani}\ and\ \citenamefont {Plenio}(2000)}]{VP00}%
  \BibitemOpen
  \bibfield  {author} {\bibinfo {author} {\bibfnamefont {S.}~\bibnamefont
  {Virmani}}\ and\ \bibinfo {author} {\bibfnamefont {M.~B.}\ \bibnamefont
  {Plenio}},\ }\bibfield  {title} {\bibinfo {title} {Ordering states with
  entanglement measures},\ }\href
  {https://doi.org/https://doi.org/10.1016/S0375-9601(00)00157-2} {\bibfield
  {journal} {\bibinfo  {journal} {Physics Letters A}\ }\textbf {\bibinfo
  {volume} {268}},\ \bibinfo {pages} {31} (\bibinfo {year} {2000})}\BibitemShut
  {NoStop}%
\bibitem [{\citenamefont {Holevo}\ \emph {et~al.}(1999)\citenamefont {Holevo},
  \citenamefont {Sohma},\ and\ \citenamefont {Hirota}}]{HSH99}%
  \BibitemOpen
  \bibfield  {author} {\bibinfo {author} {\bibfnamefont {A.~S.}\ \bibnamefont
  {Holevo}}, \bibinfo {author} {\bibfnamefont {M.}~\bibnamefont {Sohma}},\ and\
  \bibinfo {author} {\bibfnamefont {O.}~\bibnamefont {Hirota}},\ }\bibfield
  {title} {\bibinfo {title} {Capacity of quantum {G}aussian channels},\ }\href
  {https://doi.org/10.1103/PhysRevA.59.1820} {\bibfield  {journal} {\bibinfo
  {journal} {Phys. Rev. A}\ }\textbf {\bibinfo {volume} {59}},\ \bibinfo
  {pages} {1820} (\bibinfo {year} {1999})}\BibitemShut {NoStop}%
\bibitem [{\citenamefont {Marian}\ and\ \citenamefont {Marian}(2008)}]{MM08}%
  \BibitemOpen
  \bibfield  {author} {\bibinfo {author} {\bibfnamefont {P.}~\bibnamefont
  {Marian}}\ and\ \bibinfo {author} {\bibfnamefont {T.~A.}\ \bibnamefont
  {Marian}},\ }\bibfield  {title} {\bibinfo {title} {Entanglement of formation
  for an arbitrary two-mode {G}aussian state},\ }\href
  {https://doi.org/10.1103/PhysRevLett.101.220403} {\bibfield  {journal}
  {\bibinfo  {journal} {Phys. Rev. Lett.}\ }\textbf {\bibinfo {volume} {101}},\
  \bibinfo {pages} {220403} (\bibinfo {year} {2008})}\BibitemShut {NoStop}%
\bibitem [{\citenamefont {Tserkis}\ and\ \citenamefont {Ralph}(2017)}]{TR17}%
  \BibitemOpen
  \bibfield  {author} {\bibinfo {author} {\bibfnamefont {S.}~\bibnamefont
  {Tserkis}}\ and\ \bibinfo {author} {\bibfnamefont {T.~C.}\ \bibnamefont
  {Ralph}},\ }\bibfield  {title} {\bibinfo {title} {Quantifying entanglement in
  two-mode {G}aussian states},\ }\href
  {https://doi.org/10.1103/PhysRevA.96.062338} {\bibfield  {journal} {\bibinfo
  {journal} {Phys. Rev. A}\ }\textbf {\bibinfo {volume} {96}},\ \bibinfo
  {pages} {062338} (\bibinfo {year} {2017})}\BibitemShut {NoStop}%
\bibitem [{\citenamefont {Tserkis}\ \emph {et~al.}(2019)\citenamefont
  {Tserkis}, \citenamefont {Onoe},\ and\ \citenamefont {Ralph}}]{TOR19}%
  \BibitemOpen
  \bibfield  {author} {\bibinfo {author} {\bibfnamefont {S.}~\bibnamefont
  {Tserkis}}, \bibinfo {author} {\bibfnamefont {S.}~\bibnamefont {Onoe}},\ and\
  \bibinfo {author} {\bibfnamefont {T.~C.}\ \bibnamefont {Ralph}},\ }\bibfield
  {title} {\bibinfo {title} {Quantifying entanglement of formation for two-mode
  {G}aussian states: Analytical expressions for upper and lower bounds and
  numerical estimation of its exact value},\ }\href
  {https://doi.org/10.1103/PhysRevA.99.052337} {\bibfield  {journal} {\bibinfo
  {journal} {Phys. Rev. A}\ }\textbf {\bibinfo {volume} {99}},\ \bibinfo
  {pages} {052337} (\bibinfo {year} {2019})}\BibitemShut {NoStop}%
\bibitem [{\citenamefont {Wolf}\ \emph {et~al.}(2004)\citenamefont {Wolf},
  \citenamefont {Giedke}, \citenamefont {Kr\"uger}, \citenamefont {Werner},\
  and\ \citenamefont {Cirac}}]{WGKW+04}%
  \BibitemOpen
  \bibfield  {author} {\bibinfo {author} {\bibfnamefont {M.~M.}\ \bibnamefont
  {Wolf}}, \bibinfo {author} {\bibfnamefont {G.}~\bibnamefont {Giedke}},
  \bibinfo {author} {\bibfnamefont {O.}~\bibnamefont {Kr\"uger}}, \bibinfo
  {author} {\bibfnamefont {R.~F.}\ \bibnamefont {Werner}},\ and\ \bibinfo
  {author} {\bibfnamefont {J.~I.}\ \bibnamefont {Cirac}},\ }\bibfield  {title}
  {\bibinfo {title} {{G}aussian entanglement of formation},\ }\href
  {https://doi.org/10.1103/PhysRevA.69.052320} {\bibfield  {journal} {\bibinfo
  {journal} {Phys. Rev. A}\ }\textbf {\bibinfo {volume} {69}},\ \bibinfo
  {pages} {052320} (\bibinfo {year} {2004})}\BibitemShut {NoStop}%
\bibitem [{\citenamefont {Akbari-Kourbolagh}\ and\ \citenamefont
  {Alijanzadeh-Boura}(2015)}]{akbari2015entanglement}%
  \BibitemOpen
  \bibfield  {author} {\bibinfo {author} {\bibfnamefont {Y.}~\bibnamefont
  {Akbari-Kourbolagh}}\ and\ \bibinfo {author} {\bibfnamefont {H.}~\bibnamefont
  {Alijanzadeh-Boura}},\ }\bibfield  {title} {\bibinfo {title} {On the
  entanglement of formation of two-mode {G}aussian states: a compact form},\
  }\href {https://doi.org/10.1007/s11128-015-1119-5} {\bibfield  {journal}
  {\bibinfo  {journal} {Quantum Information Processing}\ }\textbf {\bibinfo
  {volume} {14}},\ \bibinfo {pages} {4179} (\bibinfo {year}
  {2015})}\BibitemShut {NoStop}%
\bibitem [{\citenamefont {Ivan}\ and\ \citenamefont {Simon}(2008)}]{IS08}%
  \BibitemOpen
  \bibfield  {author} {\bibinfo {author} {\bibfnamefont {J.~S.}\ \bibnamefont
  {Ivan}}\ and\ \bibinfo {author} {\bibfnamefont {R.}~\bibnamefont {Simon}},\
  }\bibfield  {title} {\bibinfo {title} {Entanglement of formation for
  {G}aussian states},\ }\bibfield  {journal} {\bibinfo  {journal} {arXiv}\
  }\href {https://doi.org/https://arxiv.org/abs/0808.1658}
  {https://arxiv.org/abs/0808.1658} (\bibinfo {year} {2008})\BibitemShut
  {NoStop}%
\bibitem [{\citenamefont {{Arvind}}\ \emph {et~al.}(1995)\citenamefont
  {{Arvind}}, \citenamefont {Dutta}, \citenamefont {Mukunda},\ and\
  \citenamefont {Simon}}]{dutta1995real}%
  \BibitemOpen
  \bibfield  {author} {\bibinfo {author} {\bibnamefont {{Arvind}}}, \bibinfo
  {author} {\bibfnamefont {B.}~\bibnamefont {Dutta}}, \bibinfo {author}
  {\bibfnamefont {N.}~\bibnamefont {Mukunda}},\ and\ \bibinfo {author}
  {\bibfnamefont {R.}~\bibnamefont {Simon}},\ }\bibfield  {title} {\bibinfo
  {title} {The real symplectic groups in quantum mechanics and optics},\ }\href
  {https://doi.org/10.1007/BF02848172} {\bibfield  {journal} {\bibinfo
  {journal} {Pramana}\ }\textbf {\bibinfo {volume} {45}},\ \bibinfo {pages}
  {471} (\bibinfo {year} {1995})}\BibitemShut {NoStop}%
\bibitem [{\citenamefont {Braunstein}(2005)}]{braunstein2005squeezing}%
  \BibitemOpen
  \bibfield  {author} {\bibinfo {author} {\bibfnamefont {S.~L.}\ \bibnamefont
  {Braunstein}},\ }\bibfield  {title} {\bibinfo {title} {Squeezing as an
  irreducible resource},\ }\href {https://doi.org/10.1103/PhysRevA.71.055801}
  {\bibfield  {journal} {\bibinfo  {journal} {Phys. Rev. A}\ }\textbf {\bibinfo
  {volume} {71}},\ \bibinfo {pages} {055801} (\bibinfo {year}
  {2005})}\BibitemShut {NoStop}%
\bibitem [{\citenamefont {Assad}\ \emph {et~al.}(2020)\citenamefont {Assad},
  \citenamefont {Gu}, \citenamefont {Li},\ and\ \citenamefont {Lam}}]{AGLL20}%
  \BibitemOpen
  \bibfield  {author} {\bibinfo {author} {\bibfnamefont {S.~M.}\ \bibnamefont
  {Assad}}, \bibinfo {author} {\bibfnamefont {M.}~\bibnamefont {Gu}}, \bibinfo
  {author} {\bibfnamefont {X.}~\bibnamefont {Li}},\ and\ \bibinfo {author}
  {\bibfnamefont {P.~K.}\ \bibnamefont {Lam}},\ }\bibfield  {title} {\bibinfo
  {title} {Decoupling cross-quadrature correlations using passive operations},\
  }\href {https://doi.org/10.1103/PhysRevA.102.022615} {\bibfield  {journal}
  {\bibinfo  {journal} {Phys. Rev. A}\ }\textbf {\bibinfo {volume} {102}},\
  \bibinfo {pages} {022615} (\bibinfo {year} {2020})}\BibitemShut {NoStop}%
\bibitem [{\citenamefont {Duan}\ \emph {et~al.}(2000)\citenamefont {Duan},
  \citenamefont {Giedke}, \citenamefont {Cirac},\ and\ \citenamefont
  {Zoller}}]{DGCZ00}%
  \BibitemOpen
  \bibfield  {author} {\bibinfo {author} {\bibfnamefont {L.-M.}\ \bibnamefont
  {Duan}}, \bibinfo {author} {\bibfnamefont {G.}~\bibnamefont {Giedke}},
  \bibinfo {author} {\bibfnamefont {J.~I.}\ \bibnamefont {Cirac}},\ and\
  \bibinfo {author} {\bibfnamefont {P.}~\bibnamefont {Zoller}},\ }\bibfield
  {title} {\bibinfo {title} {Inseparability criterion for continuous variable
  systems},\ }\href {https://doi.org/10.1103/PhysRevLett.84.2722} {\bibfield
  {journal} {\bibinfo  {journal} {Phys. Rev. Lett.}\ }\textbf {\bibinfo
  {volume} {84}},\ \bibinfo {pages} {2722} (\bibinfo {year}
  {2000})}\BibitemShut {NoStop}%
\end{thebibliography}%

	\end{document}